\newcommand{\tildecircume}{\stackon[2pt]{\^{e}}{\~{}}}
\newtheorem{definition}{Definition}
\newtheorem{lemma}{Lemma}
\newtheorem{theorem}{Theorem}
\newtheorem{claim}{Claim}
\newtheorem{remark}{Remark}
\newtheorem{corollary}{Corollary}
\newcommand{\eps}{\epsilon}
\newcommand{\N}{\mathbb{N}}
\DeclareMathOperator*{\B}{\mathbf{B}}
\DeclareMathOperator*{\w}{\mathbf{w}}
\DeclareMathOperator*{\preds}{\mathbf{\hat{y}}}
\DeclareMathOperator*{\losses}{\boldsymbol{\ell}}
\newcommand{\mainalg}{\texttt{HierarchicalBaseline}}
\newcommand{\bucketsalg}{\texttt{UpdateBuckets}}
\newcommand{\weightsalg}{\texttt{UpdateWeights}}
\newcommand{\predictalg}{\texttt{GetPredictions}}
\newcommand{\stay}{stay\xspace}
\newcommand{\evict}{evict\xspace}
\newcommand{\sandeep}[1]{\textcolor{orange}{[sandeep\@ifnotempty{#1}{: #1}]}}
\title{Improved Space Bounds for Learning with Experts}
\author{Anders Aamand \\ MIT \\ \texttt{aamand@mit.edu} \and Justin Y.\ Chen \\ MIT \\ \texttt{justc@mit.edu}  \and Huy L\^{e} Nguy\tildecircume n\\ Northeastern University \\ \texttt{hu.nguyen@northeastern.edu} \and Sandeep Silwal\\ MIT  \\ \texttt{silwal@mit.edu}}
\date{}
\begin{document}
\maketitle
\begin{abstract}
We give improved tradeoffs between space and regret for the online learning with expert advice problem over $T$ days with $n$ experts. Given a space budget of $n^{\delta}$ for $\delta \in (0,1)$, we provide an algorithm achieving regret $\tilde{O}(n^2 T^{1/(1+\delta)})$, improving upon the regret bound $\tilde{O}(n^2 T^{2/(2+\delta)})$ in the recent work of \cite{peng2023onlinesublinear}. The improvement is particularly salient in the regime $\delta \rightarrow 1$ where the regret of our algorithm approaches $\tilde{O}_n(\sqrt{T})$, matching the $T$ dependence in the standard online setting without space restrictions.

\end{abstract}

\section{Introduction}
Understanding the performance of learning algorithms under information constraints is a fundamental research direction in machine learning. While performance notions such as regret in online learning have been well explored, a recent line of work explores \emph{additional} constraints in learning, with a particular emphasis on \emph{limited memory} \cite{shamir14, woodworthS19, marsdenSSV22} (see also Section~\ref{sec:related_work}).

In this paper, we focus on the online learning with experts problem, a general framework for sequential decision making, with memory constraints. In the online learning with experts problem, an algorithm must make predictions about the outcome of an event for $T$ consecutive days based on the predictions of $n$ experts. The predictions of the algorithm at a time $t \le T$ can only depend on the information it has received in the previous days as well as the predictions of the experts for day $t$. After predictions are made, the true outcome is revealed and the algorithm and all experts receive some loss (likely depending on the accuracy of their predictions). In addition to the fact that the online experts problem has found numerous algorithmic applications \cite{aroraHK12}, studying the problem with memory constraints is especially interesting in light of the fact that existing algorithms explicitly track the cumulative loss of every expert and follow the advice of a leading expert, which requires $\Omega(n)$ memory. 

Motivated by this lack of understanding, the online learning with experts problem with memory constraints was recently introduced in \cite{srinivas2022memoryexperts}, which studied the case where the losses of the experts form an i.i.d. sequence or where the loss of the best expert is bounded. The follow up work of \cite{peng2023onlinesublinear} removed these assumptions and obtained an algorithm which achieves $\tilde{O}(T^{\frac{2}{2+\delta}})$\footnote{$\tilde{O}$ hides polylogarithmic factors in $T$, and for simplicity we ignore additional $\text{poly}(n)$ overhead factors in the regret bounds in this part of the introduction.} regret
using memory $n^{\delta}$ in a general setting; see Section \ref{sec:related_work} for a more detailed comparison. Intuitively, the result of \cite{peng2023onlinesublinear} suggests that only keeping track of an (evolving) set of $n^{\delta}$  experts at any fixed day is sufficient to achieve sublinear regret. However, their work leaves open a natural question in the case where the space budget approaches near linear space. In this regime, it is natural to guess that regret $\tilde{O}(\sqrt{T})$ is achievable, namely the regret bound achieved by the standard multiplicative weights update (MWU) algorithm (among many others \cite{littlestoneW89,kalaiV03, aroraHK12, hazan16}) which uses $O(n)$ space. However, the algorithm of \cite{peng2023onlinesublinear} only achieves regret $\tilde{O}(T^{2/3})$ in this regime.

We close this gap in understanding by providing an algorithm with approximately $\tilde{O}(T^{\frac{1}{1+\delta}})$ regret using $n^{\delta}$ memory, thus obtaining regret $\tilde{O}(\sqrt{T})$ in the near linear memory regime.

\subsection{Our Results}
We give a brief overview of the problem setting to state our results, deferring the full details to Section~\ref{sec:prelim}. In the online experts problem, on each day over a sequence of $T$ days, we are to play one of $n$ experts. 
After playing an expert $i_t\in [n]$ on day $t$, a loss vector $\ell_t\in [0,1]^n$ is revealed and we receive the loss $\ell_t(i_t)$. 
In this paper, we assume that the loss sequence of each expert is picked by an oblivious adversary. Our goal is to minimize the standard notion of regret in online learning, defined as the total loss of the predictions made by our algorithm in comparison to the total loss of the best expert in hindsight: $\text{Regret} = \sum_{t \in [T]}
\ell_t(i_t) - \min_{i \in [n]} \sum_{t \in [T]}\ell_t(i)$. Our main result is the following.

\begin{restatable}{theorem}{mainthm}
\label{thm:main}
For any $\delta\in (0,1)$, there exists an algorithm for the online experts problem over $T=n^{O(1)}$ days which uses space $\tilde O(n^\delta)$ and achieves regret $\tilde O(n^2T^{\frac{1}{1+\delta}})$ with probability $1-1/\text{poly}(T)$.
\end{restatable}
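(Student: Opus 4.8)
The plan is to build a hierarchical (multi-level) sampling structure over the $n$ experts, where each level maintains a sparse sketch of the "surviving" experts and runs a memory-bounded variant of multiplicative weights. At a high level, I would partition time into epochs, and within each epoch maintain a pool of roughly $n^\delta$ active experts; the key is to periodically \emph{resample} which experts are active so that, with high probability, an expert that is close to the best-in-hindsight is in the pool for long enough stretches to drive the regret down. The standard MWU guarantee says that if we run MWU on a pool containing a good expert for a block of $L$ days, we pay $O(\sqrt{L \log(\text{pool size})})$ regret against the best expert in that pool on that block; the difficulty is that the global best expert need not be in the pool, so I would use a \emph{tournament / hierarchical} structure: at the bottom level, many small MWU instances each track a random $O(n^\delta)$-subset; at higher levels, instances treat the outputs of lower-level instances as "meta-experts." Because each meta-expert summarizes many real experts, $O(\log_{n^\delta} n) = O(1/\delta)$ levels suffice to cover all $n$ experts, and the total space is $\tilde O(n^\delta)$ since each level stores only $\tilde O(n^\delta)$ weights/counters.

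The key steps, in order, would be: (1) Formalize the memory model and the per-day access pattern (we see all $n$ experts' losses on day $t$ but can only store $\tilde O(n^\delta)$ words across days). (2) Describe \bucketsalg{}: how the active pool is chosen and refreshed — I'd use random sampling combined with an explicit "carry forward the current leader of each bucket" rule, resetting buckets every $R$ days for a parameter $R$ to be optimized. (3) Describe \weightsalg{} and \predictalg{}: run MWU at each level with an appropriate learning rate, feeding lower-level predictions up as experts. (4) Regret accounting: decompose the total regret into (a) the regret of the top-level MWU against the best meta-expert, plus (b) recursively, the regret incurred at each lower level, plus (c) a "sampling loss" term accounting for epochs in which the good expert was absent from the pool. (5) Choose the epoch length $R$ (likely $R \approx T^{\delta/(1+\delta)}$, so the number of epochs is $T/R \approx T^{1/(1+\delta)}$) and balance the $O(\sqrt{R})$-per-epoch MWU regret against the $O(T/R)$ "switching"/resampling overhead; this is exactly the tradeoff that yields the $T^{1/(1+\delta)}$ exponent, with the $\tilde O(n^2)$ factor absorbing the per-level $\text{poly}(n)$ losses and the union bound over $\text{poly}(T)$ events (using $T = n^{O(1)}$).

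The main obstacle I expect is step (4)(c): controlling the regret contributed by days/epochs in which the current best expert is \emph{not} in the active pool. A purely random sample of size $n^\delta$ contains a fixed good expert only with probability $n^{\delta-1}$, which is far too small; so the argument must instead exploit that the losses are generated by an \emph{oblivious} adversary and that we only need to compete with \emph{some} near-optimal expert, not a pre-specified one. The natural fix is a pigeonhole/amortization argument: whenever our pool's best expert is much worse than the global best, the global best must be "winning" by a large margin, and such large-margin events can happen only a bounded number of times before we would have resampled into a good region — but making this rigorous, and in particular ruling out adaptive-looking behavior where the adversary keeps the good expert just outside every pool, is the delicate part. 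I anticipate this requires a careful coupling between the resampling schedule and a potential-function argument on the gap $\min_i \sum_{s \le t}\ell_s(i) - \min_{i \in \text{pool}_t}\sum_{s\le t}\ell_s(i)$, showing the gap grows slowly enough that $\tilde O(T^{1/(1+\delta)})$ resamplings suffice. The remaining steps (MWU regret bounds, space accounting, union bounds) should be routine given the earlier preliminaries.
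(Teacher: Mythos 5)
There is a genuine gap here, in fact two. First, your hierarchical structure is the wrong one: you propose a tournament over the \emph{experts} (many bottom-level MWU instances, each on a random $O(n^\delta)$-subset, covering all $n$ experts in $O(1/\delta)$ levels). Covering all $n$ experts at the bottom level requires $n^{1-\delta}$ instances of size $n^\delta$, i.e.\ $\Omega(n)$ words, which violates the space budget. The hierarchy that actually works is over \emph{time scales}: a single pool of $O(n^\delta)$ experts per level, with levels corresponding to nested block lengths $T_0<T_1<\cdots<T_k$, refreshed by sampling $m$ fresh experts per block and by \emph{forwarding} survivors from short blocks into longer blocks. Second, the part you flag as "the delicate part" — controlling the days on which the best expert is absent from the pool — is exactly the content of the proof, and your proposed potential-function/amortization sketch does not supply it. The mechanism that makes it work is a dominance-based eviction rule (evict $e$ if some earlier-arrived $e'$ has loss since its own arrival at most $(1+\eps)$ times that of $e$, with $\eps\approx \log T/m$), which yields a dichotomy over blocks: either some surviving expert is already $(1+\eps)$-competitive with $e^*$ over an interval covering the block (an evict-block, contributing only $\eps T = \tilde O(T/m)$ total), or sampling $e^*$ now would keep it forever (a stay-block), and the number of stay-blocks before $e^*$ is actually caught is $\tilde O(n/m)$ with high probability. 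Without this dichotomy there is no bound better than "number of epochs times epoch length" on the bad days.

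Separately, your parameter balancing does not produce the claimed exponent. Trading $O(\sqrt{R})$ per epoch against $T/R$ epochs gives $T/\sqrt{R}$, which with $R=T^{\delta/(1+\delta)}$ is $T^{(2+\delta)/(2+2\delta)}$, not $T^{1/(1+\delta)}$. In the actual argument the exponent arises differently: the multi-level baseline achieves regret $\tilde O(T/m+\sqrt{Tn/m})$, which equals $\tilde O(n)=\tilde O(T^{1/(1+\delta)})$ only for the specific horizon $T=nm=n^{1+\delta}$; extending to general $T=n^{O(1)}$ requires a separate bootstrapping lemma that composes two experts algorithms with regrets $R_1,R_2$ over horizons $T_1,T_2$ into one with regret roughly $R_1R_2$ over $T_1T_2$ days (by running the outer algorithm on truncated per-episode losses). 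That composition step is entirely absent from your plan, and without it the claimed $T$-dependence cannot be reached for $T\gg nm$.
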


In contrast, \cite{peng2023onlinesublinear} gave an algorithm achieving regret $\tilde{O}(n^2T^{\frac{2}{2+\delta}})$ for a comparable space budget. Concretely, we obtain a smaller exponent of $T$ in the regret bound for all values of $\delta$; see Figure \ref{fig:regret}. And in the regime of near linear space where $\delta \rightarrow 1$, we achieve regret $\tilde{O}_n(\sqrt{T})$, matching the guarantees of traditional online algorithms (such as MWU) which explicitly track the performance history of all $n$ experts. In comparison, \cite{peng2023onlinesublinear} achieve regret $\tilde{O}_n(T^{2/3})$ in this regime.

\section{Technical Overview}\label{sec:technical_overview}
Within this overview, we ignore logarithmic factors for simplicity. We first describe a high level overview of the algorithm in \cite{peng2023onlinesublinear} and then describe the new ideas in this work to achieve the improved bound.
The upper bound in \cite{peng2023onlinesublinear} comprises two parts: a baseline algorithm and a bootstrapping procedure. The baseline algorithm achieves bounded regret in terms of $T$, $n$, and a space parameter $m=n^\delta$ for $\delta \in (0,1)$. For some \emph{specific} setting of $T$ in terms of $n$ and $m$, this baseline algorithm achieves regret of $T^{\frac{2}{2+\delta}}$. The bootstrapping procedure allows for this same bound to be extended for larger values of $T$. Our main contribution is to develop an improved baseline algorithm which achieves a better regret tradeoff in terms of $T$, $n$, and $m$. Additionally, we simplify and shorten the analysis of the bootstrapping procedure and present a lemma which essentially combines any two algorithms $\mathcal{A}_1$ and $\mathcal{A}_2$ with regret $R_1$ and $R_2$ over times $T_1$ and $T_2$ respectively into an algorithm with regret $R_1R_2$ over $T_1T_2$ days.

\subsection{Baseline algorithm of \cite{peng2023onlinesublinear}}

The core idea of the baseline algorithm from \cite{peng2023onlinesublinear} is to split the time $T$ into a sequence of consecutive blocks of size $T_0 < T$. Within each block, a pool of $m + \sqrt{m}$ experts is maintained. The prediction of the algorithm at any day is the output of MWU run over the experts in the current block. The weights are reset every time we move to the next block every $T_0$ timesteps. The blocks are initially set to a random sample of $m$ experts. For each expert $e$ in the pool, the algorithm tracks their arrival time as well as their average loss since any other expert $e'$ in the pool arrived. This requires memory quadratic in the number of unique arrival times of experts in the pool.

The algorithm also keeps experts which have been in the pool for a long time and have been performing relatively well. Say $t(e)$ is the time at which an expert $e$ arrived in the pool and let $\eps = 1/\sqrt{m}$. At the end of every block, if there exists a pair of experts $e, e'$ such that $t(e') < t(e)$ and the average loss of $e'$ since $t(e)$ is at most $(1+\eps)$ that of $e$, we say that $e'$ ``dominates'' $e$ and remove $e$ from the pool. After removing all dominated experts, the algorithm randomly samples and adds $m$ new experts to the pool.

\paragraph{Memory}
The number of experts in the pool at any given point is not explicitly bounded by the algorithm but is limited to $m$ newly sampled experts and $\sqrt{m}$ older experts due to the domination rule. This is proved formally using a potential argument in Section 3.2 of \cite{peng2023onlinesublinear}. Our first improvement is a domination strategy that provides a \emph{quadratic} improvement on the number of older experts stored for the same amount of memory and is much simpler to analyze. We will visit it shortly in Section \ref{sec:improved_domination}.

\paragraph{Regret}
The regret bound achieved from the baseline algorithm of \cite{peng2023onlinesublinear} is given by the following equation
\begin{equation}\label{eq:priorwork-regret}
    \frac{T}{\sqrt{m}} + \frac{T}{\sqrt{T_0}} + \frac{nT_0}{m}.
\end{equation}

In order to bound the regret of the baseline algorithm, the prior work of \cite{peng2023onlinesublinear} introduces the insightful concept of \stay and \evict-blocks (although using different terminology). Before sampling at the start of each block, we ask the hypothetical question: \emph{given the current pool of experts, if the best expert $e^*$ were sampled at this point, would it stay for the rest of time}? Note that the only reason $e^*$ would not stay for the rest of time is if there exists some expert already in the pool which will, at some point, dominate $e^*$. If the answer to this question is ``yes'', the block is a \stay-block. Otherwise, we call it an \evict-block.

For any \stay-block, if we do indeed sample the best expert, it will stay forever, and then the only regret we will pay will be the regret due to running MWU within each bucket. There are $T/T_0$ buckets with $\sqrt{T_0}$ regret in each, contributing to the second term of \cref{eq:priorwork-regret}.
If we do not sample the best expert, a \stay-block can cost up to $T_0$ regret as all of the experts in the pool may perform much worse than the best expert. However, as have $m$ chances to sample the best expert out of $n$ total experts, we will likely only see $n/m$ \stay-blocks without sampling the best expert. This contributes to the third term of \cref{eq:priorwork-regret}.

For an \evict-block, we will evict the best expert even if we sample them, but because of this we know that the expert $e'$ that evicts $e^*$ must be competitive with $e^*$ over some interval. In particular, let $t'$ be the time at which $e'$ dominates and evicts $e^*$. In order for this to happen, $e'$ must remain in the pool until $t'$ and must have average loss over the interval up to $t'$ which is at most a $(1+\eps) = (1 + 1/\sqrt{m})$ factor worse than the best expert. During these intervals starting at \evict-blocks, we pay average regret of $1/\sqrt{m} + 1/\sqrt{T_0}$ (the second term coming from the overhead of using MWU) which contributes to the first and second terms of \cref{eq:priorwork-regret}.

The final regret bound in terms of $T$ can be attained by setting $T_0 = m$ and $T = n\sqrt{m} = n^{1 + \delta/2}$. Then, the total regret is $n = T^{\frac{1}{1 + \delta/2}} = T^{\frac{2}{2 + \delta}}$.

\subsection{Improved domination strategy}\label{sec:improved_domination}
Our first contribution is an improved and simpler definition of dominance. For a pair of experts $e, e'$ in the pool, we say that $e'$ dominates $e$ if $t(e') < t(e)$ and the loss of $e'$ since $t(e')$ is at most a $(1+\eps)$ factor greater than the loss of $e$ since $t(e)$. The key difference with prior work is that the loss we use to compare two experts is just the loss of those experts \emph{since their own arrival} in the pool rather than their loss since the arrival of the later expert in the pair. Note that because of this we only need to store \emph{linear} (rather than quadratic) information in the number of experts in the pool. So we can set $\eps = 1/m$ rather than $1/\sqrt{m}$. This change also greatly simplifies the memory analysis which we outline below.

\paragraph{Memory}
At the end of each block, experts are evicted and $m$ new experts are sampled. We will argue that after this procedure, at most $2m$ experts are being tracked by the algorithm. Consider a pair of experts $e, e'$ that still remain in the pool after the eviction step. Without loss of generality, assume that $e'$ arrived before $e$. 
Then, the loss of $e$ since it arrived must be less than a $1/(1+\eps)$ factor of the loss of $e'$ since it arrived. As this holds across all pairs and losses must be in $[0,T]$, there can be at most $\log_{1/(1+\eps)}(T) = \tilde{O}(1/\eps)$. With the proper setting of $\eps$, this means that there can be at most $m$ experts after eviction and therefore at most $2m$ experts being tracked by the algorithm at any time. 

\paragraph{Regret}
The regret of the baseline algorithm with this new domination rule is
\begin{equation}\label{eq:onelevel-regret}
    \frac{T}{m} + \frac{T}{\sqrt{T_0}} + \frac{nT_0}{m}.
\end{equation}

We account for regret using the key idea of \stay and \evict-blocks from \cite{peng2023onlinesublinear}. The key difference from \cref{eq:priorwork-regret} to our regret bound in \cref{eq:onelevel-regret} is the $\eps$ used for dominance is quadratically smaller in our construction, leading to $T/m$ rather than a $T/\sqrt{m}$ term.

The final regret bound in terms of $T$ can be gained by setting $T_0 = m^2$ and $T = nm^2 = n^{1 + 2\delta}$. Then, the total regret is $nm = n^{1 + \delta} = T^{\frac{1 + \delta}{1 + 2\delta}}$. This is a strict improvement over the bound from \cite{peng2023onlinesublinear} (see the first gray curve in Figure \ref{fig:regret}), however, with linear memory, it still gives the undesirable bound of $T^{2/3}$ regret.

\subsection{Improved hierarchical baseline}
A fundamental bottleneck in the above algorithm is the tradeoff between sampling new experts, which corresponds to the $nT_0/m$ term in the regret in \cref{eq:onelevel-regret}, and keeping any sampled expert for a long time horizon to sufficiently reap the benefits of the MWU, which corresponds to the term $T/\sqrt{T_0}$ in \cref{eq:onelevel-regret}. The term $nT_0/m$ incentives setting $T_0$ to be a small quantity so that we essentially sample at a higher frequency, leading to an increased likelihood of sampling the best expert in one of the \stay-blocks mentioned above. On the other hand, the latter term $T/\sqrt{T_0}$ incentives keeping any expert around for a larger value of $T_0$. The above algorithm must balance between these two conflicting goals. An idealistic goal is to obtain the `best of both worlds' by setting $T_0$ to be small in $nT_0/m$, and thereby `exploring' many experts, while simultaneously setting $T_0$ to be small in $T/\sqrt{T_0}$ and `exploiting' the experts that we are currently tracking. Our improvement is inspired by attempting to implement such a hypothetical plan of action.

In order to further improve the algorithm, we use a hierarchy of $k$ blocks of different sizes: we divide time $T$ into blocks of $T_{k-1}$ and divide those blocks further into blocks of size $T_{k-2}$ and so on down to the smallest blocks of size $T_0$. By passing along good experts between hierarchies and viewing each hierarchy as a meta expert, we can effectively sample experts at a higher rate due to shorter hierarchies while retaining the benefits of MWU from longer spanning hierarchies, thus achieving the `best of both worlds'.

In this overview, we give a detailed description of the $k=2$ case to showcase why using multiple blocks is helpful (our algorithm will use $k \approx \log\log T$).

In \cref{eq:onelevel-regret}, the block size $T_0$ appears in the second and third terms. A larger block size reduces regret in the second term because we run MWU longer between resets and the average regret of MWU is inversely proportional to the square root of the span of time over which it is run. On the other hand, a smaller block size reduces regret that we pay in \stay-blocks where we do not sample the best expert. In other words, small blocks mean we sample more often. The goal of using two different block sizes will be to get the best-of-both-worlds by charging regret to large \evict-blocks and small \stay-blocks.

\begin{figure}[ht]
    \centering
    \includegraphics[width=\textwidth]{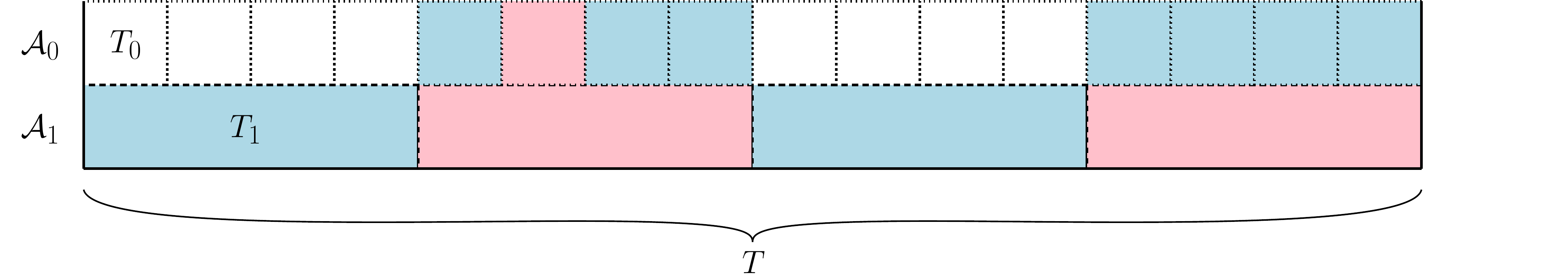}
    \caption{The figure depicts our hierarchical baseline scheme in the special case of only two hierarchies. The blue blocks denote evict-blocks and red blocks denote stay-blocks.}
    \label{fig:toydiagram}
\end{figure}

Let $T_0 < T_1 < T$ be the block sizes with $T_0$ dividing $T_1$ and $T_1$ dividing $T$. We refer to Figure \ref{fig:toydiagram} for reference. Let $\mathcal{A}_0$ and $\mathcal{A}_1$ be the outputs of the one level baseline algorithm from the previous section using $T_0$ and $T_1$, respectively. The hierarchical baseline will be the output of MWU run between $\mathcal{A}_0$ and $\mathcal{A}_1$, reset every $T_1$ block. For each $T_1$ block, we can then bound the regret as the better of $\mathcal{A}_0$ and $\mathcal{A}_1$ plus an additional $\sqrt{T_1}$ term from MWU. For every $T_1$ \evict-block, we can bound the regret as in the first two terms of \cref{eq:onelevel-regret} but using block size $T_1$. For every $T_1$ \stay-block, we will consider the regret of $\mathcal{A}_0$.

Every $T_0$ \evict-block contributes two terms to the regret: one for the loss of the evicting expert with respect to the best expert and one for running MWU. The first term contributes at most an additive $T/m$ to the total regret. The second term contributes $\sqrt{T_0}$ for each \evict-block we count. However, as we only count $T_0$ \evict-blocks in $T_1$ \stay-blocks and after $n/m$ $T_1$ \stay-blocks we will sample and retain the best expert, this contributes a total regret of $\frac{n}{m} \left(\frac{T_1}{T_0}\right) \sqrt{T_0}$.

Finally, we consider the $T_0$ \stay-blocks in $T_1$ \stay-blocks. In total, there can be $\frac{n}{m}$ $T_0$ \stay-blocks before we sample the best expert and it stays forever. However, even if we sampled the best expert in the $T_0$ blocks, we have to pay the $\sqrt{T_0}$ MWU cost every block. Therefore, we make a key algorithmic change: \emph{at the end of every $T_1$ block, we make a copy in $\mathcal{A}_1$ of every expert in $\mathcal{A}_0$ after eviction and we enforce that those copied experts in $\mathcal{A}_0$ cannot be evicted until the next $T_1$ block}. This increases our memory to at most $k^3m$ in total as each block maintains at most $km$ experts after eviction and receives at therefore receives at most $km$ experts from each of smaller blocks. Furthermore, this guarantees that if we ever sample the best expert into a $T_0$ \stay-block, it will be in every following $T_1$ block. Thus, the total contribution to regret of $T_0$ \stay-blocks in $T_1$ \stay-blocks is $\frac{n}{m} T_0 + \frac{T_1}{T_0}$.
In total, we can bound the regret of this two-level algorithm as
\begin{equation}\label{eq:twolevel-regret}
    \frac{T}{m} + \frac{T}{\sqrt{T_1}} + \frac{n}{m}\left(\frac{T_1}{\sqrt{T_0}} + T_0\right).
\end{equation}

Setting $T_0 = m^{4/3}$, $T_1 = m^2$, and $T = nm^{4/3} = n^{1 + 4\delta/3}$ yields regret $nm^{1/3} = n^{1 + \delta/3} = T^{\frac{1 + \delta/3}{1 + 4\delta/3}} = T^{\frac{3 + \delta}{3 + 4\delta}}$. This bound interpolates between regret $T$ and $T^{4/7}$ as $\delta$ goes from $0$ to $1$, improving upon the one-level algorithm across the board and in particular in the linear memory regime (see the second gray curve in Figure \ref{fig:regret}). The full hierarchical baseline algorithm in \cref{sec:algo} extends this hierarchical idea to $k = O(\log\log T)$ to achieve regret $\tilde{O}(T^{\frac{1}{1+\delta}})$ for $T = nm$.

We emphasize that it is not sufficient to simply run the algorithm of \cite{peng2023onlinesublinear} recursively on top of every $T_1$ block (see Figure \ref{fig:toydiagram}) to achieve $T^{\frac{1}{1+\delta}}$ regret. It is crucial to pass down experts from a higher level to a lower level to replace the `large' regret term $\frac{nT_1}m$ which is incurred by stay-blocks in $\mathcal{A}_1$ to the `small' regret term $\frac{nT_0}m$ incurred by stay-blocks in $\mathcal{A}_0$. However, we cannot only pass down experts from the higher level $\mathcal{A}_0$ to $\mathcal{A}_1$ as this creates a subtle issue: the time intervals where a good expert in $\mathcal{A}_0$ is competitive with $e^*$ can possibly overlap with the time interval when an expert in $\mathcal{A}_1$ is competitive with $e^*$. This creates the dilemma where we cannot guarantee that a competitive expert exists in the larger time interval which spans the \emph{union} of the time intervals when the two experts are competitive with $e^*$. To see this, consider the following toy example in the case $T = 5$: suppose expert $e_1$ has loss sequence $[1,0,0,0,1]$, $e_2$ has loss sequence $[0,0,0,1,1]$, $e^*$ has the loss sequence $[0,0,1,0,0]$, and we can only pick one expert to follow in these $5$ days. Then $e_1$ is competitive with $e^*$ in days $2$ through $5$ (in the sense that it receives the same total loss as $e^*$ during these days) and $e_2$ is competitive with $e^*$ in days $1$ through $4$, but no expert is competitive with $e^*$ in the entire time interval. 

We avoid this issue by enforcing that, in a small block, the experts which are passed to larger blocks cannot be evicted until the end of those larger blocks. This means that if the small block is an \evict-block, the evicting expert must be competitive with $e^*$ over a time period which is the union of a few small blocks and one or more larger blocks.  

\begin{figure}[ht]
    \centering
    \includegraphics[width=0.6\textwidth]{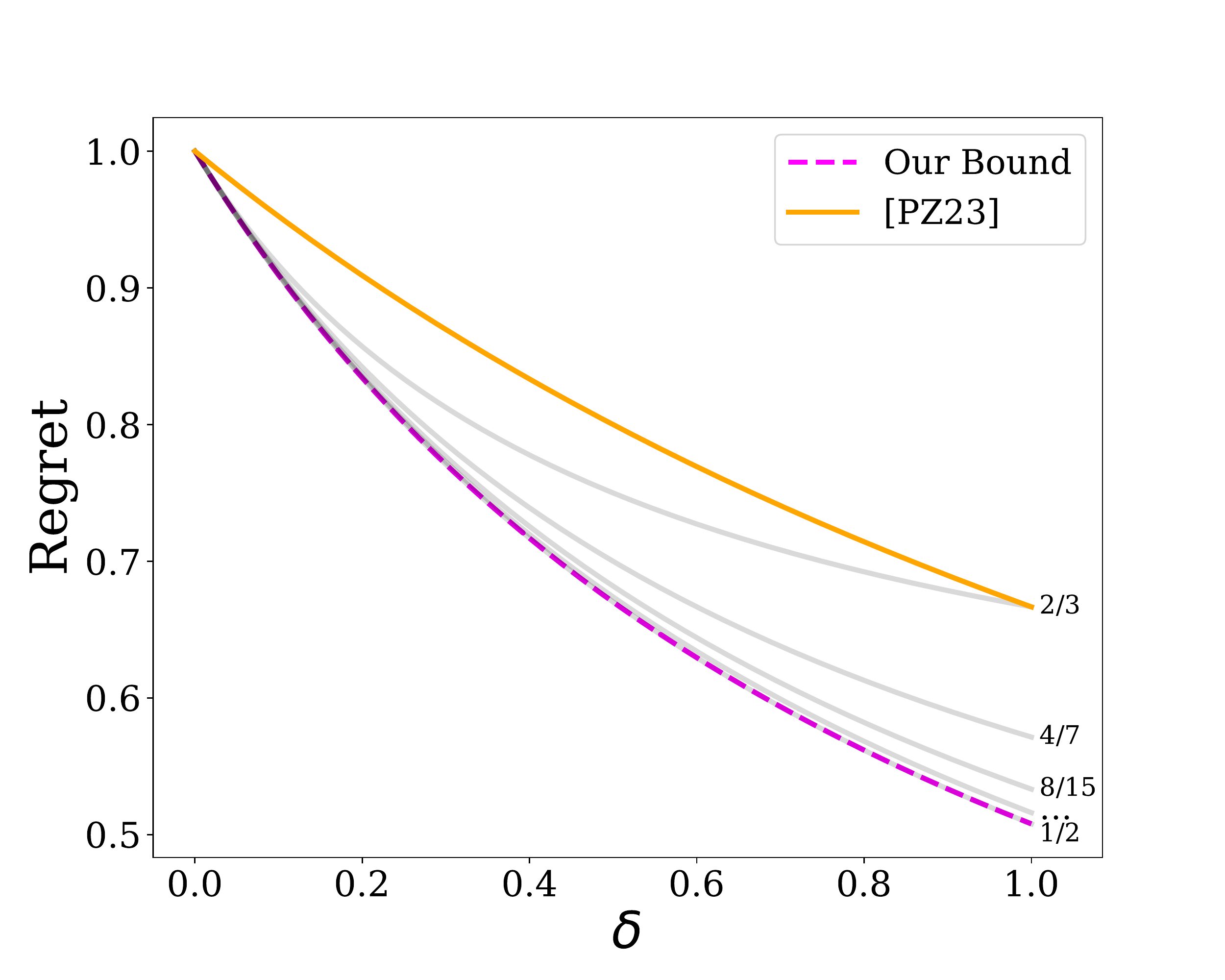}
    \caption{The $x$-axis represents the exponent of $n$ in the space budget and the $y$ axis represents the exponent of $T$ in the regret. The orange curve shows the plot $f(\delta) = 2/(2+\delta)$, representing the main result of \cite{peng2023onlinesublinear}. The gray curves represent our regret bounds using a hierarchy of $k$ blocks of different sizes for $k=1, 2, 3, \ldots$. Each setting of $k$ improves the memory-regret tradeoff, as indicated by the fact that in the near linear space regime, we obtain exponents of $2/3, 4/7, 8/15, \ldots$ as $k$ increases. Our final choice of $k = O(\log \log T)$ leads to the exponent of $1/(1+\delta)$ for $T$ given in Theorem \ref{thm:main}, and is represented by the dashed magenta curve.}
    \label{fig:regret}
\end{figure}

\subsection{Bootstrapping for larger $T$}
The results stated above hold for a particular choice of $T$. In actuality, $T$ is part of the input and cannot be controlled by the algorithm. The authors of \cite{peng2023onlinesublinear} introduce a recursive width reduction technique which bootstraps their baseline algorithm to hold for larger values of $T$. Width reduction refers to the fact that the procedure uses the baseline algorithm to reduce the range of losses.
We give a more general version of this result (stated below) along with a simpler analysis in \cref{sec:boostrap}.

\begin{restatable}{lemma}{bootstrap}\label{lem:bootstrap}
Suppose that there are two algorithms ALG1 and ALG2 for the expert
problem with $n$ experts with daily loss range $1$ (the difference
between the maximum and the minimum loss) with the following parameters:
ALG1 is over $T_{1}$ days and has regret at most $R_{1}$, ALG2 is
over $T_{2}$ days and has regret at most $R_{2}$, each with probability
$1-\delta$. Furthermore, suppose both algorithms have space complexity $m$. Then there is an algorithm ALG' over $T_{1}\cdot T_{2}$ days and
regret at most $R_{1}\cdot R_{2}+O\left(T_{1}\cdot\sqrt{T_{2}\ln\left(n/\delta\right)}\right)$
with probability $1-\delta\left(nT_{1}+1\right)$. Furthermore, ALG' has space complexity at most $O(m)$.
\end{restatable}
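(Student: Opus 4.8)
The plan is a two-level block decomposition: run ALG2 inside short blocks, and run ALG1 as a meta-learner across blocks, fed a \emph{width-reduced} loss sequence whose per-round range is $R_2$ rather than $T_2$.

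First I would split the $T_1T_2$ days into $T_1$ consecutive blocks of length $T_2$ and run a fresh copy of ALG2 inside each block $b$, so that with probability $1-\delta$ its total block loss $L_b$ satisfies $L_b\le\min_{i\in[n]}L_b^*(i)+R_2$, where $L_b^*(i):=\sum_{t\in b}\ell_t(i)$. Since ALG2 is near-optimal within the block, $(L_b-L_b^*(i))^+\le R_2$ for every $i$, so
\[
\tilde m_b(i):=R_2-\bigl(L_b-L_b^*(i)\bigr)^+\in[0,R_2]
\]
is well defined and $\tilde m_b(\cdot)/R_2$ is a legal loss vector of range at most $1$. I would hand $\tilde m_b(\cdot)/R_2$ to ALG1 as its loss on meta-day $b$ and let $I_b\in[n]$ be the expert it plays there. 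Inside block $b$, rather than committing blindly to $I_b$, I would run a small exponential-weights rule between the two strategies ``copy ALG2'' and ``play the fixed expert $I_b$'', so that the block loss $A_b$ of ALG' satisfies $A_b\le\min\{L_b,\,L_b^*(I_b)\}+O\bigl(\sqrt{T_2\ln(n/\delta)}\bigr)$.

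The regret bound then telescopes. Write $i^*:=\argmin_i\sum_t\ell_t(i)$ and $L^*=\sum_bL_b^*(i^*)$. By the ALG1 guarantee on the rescaled losses, $\sum_b\tilde m_b(I_b)\le\sum_b\tilde m_b(i^*)+R_1R_2$, and dropping the positive part gives $\tilde m_b(i^*)\le R_2-L_b+L_b^*(i^*)$, so
\[
\sum_b\tilde m_b(I_b)\;\le\;T_1R_2-\sum_bL_b+L^*+R_1R_2 .
\]
On the other hand the identity $\min\{L_b,L_b^*(I_b)\}=L_b-\bigl(L_b-L_b^*(I_b)\bigr)^+=L_b-R_2+\tilde m_b(I_b)$ gives $A_b\le L_b-R_2+\tilde m_b(I_b)+O(\sqrt{T_2\ln(n/\delta)})$; summing over $b$ and substituting the display, the $T_1R_2$ and $\sum_bL_b$ terms cancel and we are left with $\sum_bA_b\le L^*+R_1R_2+O\bigl(T_1\sqrt{T_2\ln(n/\delta)}\bigr)$, which is the claimed regret. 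The failure probability is a union bound over the single run of ALG1, the $T_1$ copies of ALG2, and the per-block/per-expert concentration used below.

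The step I expect to be the real obstacle is \emph{space}, not regret. Forming $\tilde m_b(\cdot)$ exactly calls for the within-block loss $L_b^*(i)$ of every expert --- $n$ running counters --- while only $O(m)$ memory is available. The fix is to keep exact counters only for the $O(m)$ experts that ALG1 actually touches on meta-day $b$: its current working set, plus any experts it introduces afresh, both of which are determined by ALG1's state (and fresh randomness) at the \emph{start} of block $b$ and can therefore be tracked on the fly in $O(1)$ space apiece (here one uses that the space-$m$ algorithms we need to combine inspect the losses of only $O(m)$ experts per round). The remaining coordinates are supplied through cheap estimators built from a few sampled days of the block; a Freedman/Azuma bound over the $\le nT_1$ (expert, block) pairs makes these accurate up to an additive $O(\sqrt{T_2\ln(n/\delta)})$ per block with probability $1-\delta$ each, which is simultaneously the origin of the extra $O(T_1\sqrt{T_2\ln(n/\delta)})$ regret term and of the $nT_1$ factor in the failure probability. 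The delicate point is verifying that the truncation in the definition of $\tilde m_b$ still interacts correctly with these estimation errors, so that the telescoping above survives with the surrogate losses; granting that, the space accounting (one live copy of ALG2, one copy of ALG1, $O(m)$ counters, $O(1)$ for the in-block hedge) yields $O(m)$ overall.
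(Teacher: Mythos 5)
Your regret argument is correct and follows essentially the same route as the paper: decompose $T_1T_2$ days into $T_1$ episodes of length $T_2$, run a fresh copy of ALG2 per episode, width-reduce the episode losses to range $O(R_2)$ via a truncation anchored at ALG2's loss, and run ALG1 as a meta-learner on the truncated losses. The one genuine difference is cosmetic but pleasant: the paper builds, for \emph{every} expert $e_j$ and every episode, a synthetic expert $s_j$ that is itself an MWU between $e_j$ and ALG2, and feeds ALG1 the truncated losses of these synthetic experts (whose range is $\max(R_2,R_3)$ with $R_3=O(\sqrt{T_2\ln(n/\delta)})$); you instead feed ALG1 the truncated losses of the original experts directly (range exactly $R_2$) and run a single two-arm hedge per block between ALG2 and the one expert $I_b$ that ALG1 actually plays. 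Your telescoping identity $\min\{L_b,L_b^*(I_b)\}=L_b-R_2+\tilde m_b(I_b)$ checks out and cleanly recovers $R_1R_2+O(T_1\sqrt{T_2\ln(n/\delta)})$; the failure probability is a union bound over one run of ALG1, $T_1$ runs of ALG2, and $T_1$ hedges, comfortably below $\delta(nT_1+1)$. (One point worth stating explicitly in either version: the meta-loss $\tilde m_b(\cdot)$ depends on ALG2's randomness, but that randomness is independent of ALG1's, so it is still an oblivious sequence from ALG1's perspective.)

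The only real flaw is the second half of your space discussion, and it is a self-inflicted one. In the paper's query model, ALG1 selects its set $E_b$ of at most $m$ experts at the start of meta-day $b$ and only ever observes the losses of those experts; the loss coordinates $\tilde m_b(i)$ for $i\notin E_b$ need to \emph{exist} for the analysis (so that the comparison against $i^*$ makes sense) but never need to be computed, estimated, or fed to anything. So the sampling-based estimators for ``the remaining coordinates,'' the Freedman/Azuma bounds over $nT_1$ pairs, and the unresolved ``delicate point'' about how estimation error interacts with the truncation should all simply be deleted: exact counters for the $\le m$ experts in $E_b$, one live copy of ALG2, and $O(1)$ state for the in-block hedge already give $O(m)$ space and a complete proof. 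Relatedly, your attribution of the $O(T_1\sqrt{T_2\ln(n/\delta)})$ term and the $nT_1$ factor to the estimation step is wrong even in the paper's version; there they come from the per-episode MWU regret and from union-bounding over the $n$ per-expert MWUs in each of the $T_1$ episodes, respectively.
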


\section{Related Works}\label{sec:related_work}
\paragraph{Comparison with \cite{srinivas2022memoryexperts} and \cite{peng2023onlinesublinear}}
These works are the most relevant to us. \cite{srinivas2022memoryexperts} initiated the study of the online learning with experts problem with sublinear space. They showed that in the case where the loss sequence of each expert is i.i.d., $O(\sqrt{T} \cdot n^{(1-\delta)/2})$ regret is achievable with $n^{\delta}$ space. Their algorithm design is intricately tied to the i.i.d. assumption. A matching lower bound for the problem was also given in their paper (this lower bound also applies to our adversarial setting). We note that a qualitatively similar lower bound for the i.i.d. setting with limited memory was also proven in \cite{shamir14}. They also studied the (standard) worst-case model of an oblivious adversary and showed that sublinear regret is possible under the strong assumption that the best expert has \emph{sublinear} loss.

\cite{peng2023onlinesublinear} show that neither the i.i.d. assumption nor the assumption that the best expert achieves sublinear loss across all $T$ days are needed to achieve sublinear regret in sublinear space. In particular, they show that under the standard oblivious adversary model, defined in Section \ref{sec:prelim}, $\tilde{O}(n^2 T^{\frac{2}{2+\delta}})$ regret (with high probability) is achievable in space $n^{\delta}$. 
Our work is also under this same general model studied in \cite{peng2023onlinesublinear} and we achieve regret $\tilde{O}(n^2 T^{\frac{1}{1+\delta}})$ in space $n^{\delta}$, improving upon the regret bound of \cite{peng2023onlinesublinear} for $\delta > 0$. In particular, as $\delta \rightarrow 1$ (ignoring $\text{poly}(n)$ factors), our regret approaches $\tilde{O}_n(\sqrt{T})$ but the regret of \cite{peng2023onlinesublinear} approaches $\tilde{O}_n(T^{2/3})$. The former is the right scaling as it is known that $\tilde{\Theta}(\sqrt{T})$ regret is achievable and necessary in the standard online learning with experts problem with no space considerations \cite{aroraHK12}.
\cite{peng2023onlinesublinear} also show that under the stronger adaptive adversary model where the input sequence can depend on the randomness used by the algorithm so far, sublinear regret in sublinear space is impossible. This additionally motivates the setting of oblivious adversaries.

Note that the lower bound given in \cite{srinivas2022memoryexperts} holds for the i.i.d. setting so it automatically also extends to the general setting of an oblivious adversary. However, as $\delta \rightarrow 0$, the lower bound of \cite{srinivas2022memoryexperts} implies that $\Omega(\sqrt{Tn})$ regret is required. If $n \ll T$, this implies that with logarithmic space, it might still be possible obtain sublinear (in $T$ regret). Either strengthening the lower bound to show that $\Omega(T)$ regret is required in this extremely small space regime, or strengthening our upper bounds in the very small memory regime are both exciting directions for future research.

\paragraph{The experts problem} 

Both the experts problem and the MWU algorithm are quite general and have found applications in many algorithmic and optimization problems including boosting, graph algorithms, portfolio optimization, linear programming, statistical estimation, and learning theory among many others \cite{brown1951iterative, littlestoneW89, freundS95, coverO96, ordentlichC98, bianchi06, christianoKMST11, garberH16, klivansM17, hopkins0Z20}. We especially refer to the survey \cite{aroraHK12} for further information. It is an interesting open direction to explore these applications in the bounded memory regime. 

\paragraph{Memory limited learning}Learning with constrained memory is a rich field of study in its own right with extensive works on convex optimization, kernel methods, statistical queries, and general machine learning algorithms \cite{williamsS00, rahimiR07, mitliagkasC013, shamir14, steinhardtD15, steinhardtVW16, kolRT17, raz17, daganS18, daganKS19, sharanSV19, gargRT19, bakshi22}. The focus of many of these works is on lower bounds or they are not in the online model. The work in these and related areas is extensive and we refer to the referenced works for further information.

Memory constraints have also been studied in the online multi-arm bandit settings, but these works are mostly focused on the stochastic settings and hence different than our worst-case viewpoint. We refer the readers to \cite{liauSPY18, chaudhuriK20, assadiW20, jinH0X21, maitiPK21} and the references within.

\section{Preliminaries}\label{sec:prelim}
We consider the learning with experts problem with $n$ experts over $T$ days. For every day $t \in [T]$, every expert $e$ makes a prediction $x_e^t \in [0,1]$. Then we receive feedback on the loss $\ell_t(e)$ of every expert $e$. We let $\ell_t \in [0, 1]^n$ denote the loss vector encoding the losses of all experts. We work under the exact same model as \cite{peng2023onlinesublinear}, which we summarize below.

\paragraph{Query Model}
In the space restricted model which we work under, we cannot explicitly store the losses and predictions of every expert at every time step. To formalize this, we first state how we obtain the losses of experts. Before every time step $t$, we are allowed to select $E_t$ experts and we observe their predictions. Then we pick a fixed expert $e \in E_t$ whose prediction we follow. Then the losses of all experts in $E_t$ is revealed and thus the loss of our algorithm on day $t$ is $\ell_t(e)$.

\paragraph{Memory/Space Model}
Our memory model can be understood in terms of the standard streaming model of computation. A word of memory holds $O(\log(nT))$ bits and at every time step $t$, our memory state consists of $M_t$ different words of memory. Some of these words are used to denote the experts $E_t$ whose predictions and losses we observe. After observing the losses, we update our memory state to use $M_{t+1}$ words of memory.  We say that an algorithm has space complexity $M$ if $M_t \le M$ for all $t \in [T]$. 

Our algorithm is randomized and assumes oracle access to random bits, similar to \cite{peng2023onlinesublinear}. This assumption can be easily removed by standard tools in the streaming literature via pseudorandom generators which only require additional poly-logarithmic space in $n$ and $T$ \cite{nisan92, indyk06}. Hence, we ignore the space complexity of generating random bits.

\paragraph{Adversary Model}
Our work is under an oblivious adversary, a standard assumption in streaming algorithms. The loss vectors $\ell_t$ for all $t \in [T]$ are  constructed by an adversary. The adversary can use randomness but their random bits are independent of any random bits used by our algorithm. Alternatively, the adversary first constructs the loss vectors in advance, and then afterwards our algorithm interacts with the adversary via the query model described above, using its own independent randomness.

\paragraph{Other Notation}
 We use $\tilde{O}$ notation to hide logarithmic terms in $n$ and $T$. As in prior works \cite{srinivas2022memoryexperts, peng2023onlinesublinear}, we assume that $n$ and $T$ are polynomially related for simplicity. We also note that our results extend easily to the case that $T$ is unknown in advance by applying the standard doubling tricks; see Remark \ref{rem:doubling}.
 
We also state the standard multiplicative weights update (MWU) algorithm, a key workhorse in our analysis, as well as its guarantees \cite{aroraHK12}.

\begin{lemma}[MWU guarantee, \cite{aroraHK12, peng2023onlinesublinear}]
\label{lem:mwu}
Suppose $n, T, \eta > 0$ and let $\ell_t \in [0,1]^n$ ($t\in [T]$) be the losses of all experts on day $t$. The multiplicative weight update algorithm satisfies
\begin{align*}
    \sum_{t=1}^{T} \langle p_t, \ell_t\rangle - \min_{i^{*}\in [n]}\ell_t(i^{*}) \leq \frac{\log n}{\eta} + \eta T,
\end{align*}
and with probability at least $1-\delta$,
\begin{align*}
    \sum_{t=1}^{T}  \ell_t(i_t) - \min_{i^{*}\in [n]}\ell_t(i^{*}) \leq \frac{\log n}{\eta} + \eta T + O\left(\sqrt{T \log (n/\delta)}\right).
\end{align*}

Taking $\eta = \sqrt{\frac{\log n}{T}}$, the MWU algorithm has a total regret of $O\left(\sqrt{T\log (nT)}\right)$ with probability at least $1- 1/\text{poly}(T)$ and a standard implementation takes $O(n)$ words of memory.
\end{lemma}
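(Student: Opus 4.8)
The plan is to prove this by the standard potential-function analysis of the multiplicative weights (Hedge) algorithm, followed by a martingale concentration bound for the randomized version. Recall that the algorithm maintains weights $w_t(1),\dots,w_t(n)$ with $w_1(i)=1$, plays the distribution $p_t(i)=w_t(i)/\sum_j w_t(j)$, and upon seeing $\ell_t$ performs the update $w_{t+1}(i)=w_t(i)\,e^{-\eta\ell_t(i)}$. First I would establish the deterministic (expected-loss) bound. Define the potential $\Phi_t=\sum_i w_t(i)$, so $\Phi_1=n$. Using the elementary inequality $e^{-x}\le 1-x+x^2/2$ for $x\ge 0$ together with $\ell_t(i)\in[0,1]$, one gets $\Phi_{t+1}=\sum_i w_t(i)e^{-\eta\ell_t(i)}\le \Phi_t\big(1-\eta\langle p_t,\ell_t\rangle+\eta^2/2\big)\le \Phi_t\exp\!\big(-\eta\langle p_t,\ell_t\rangle+\eta^2/2\big)$. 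Telescoping over $t=1,\dots,T$ yields $\Phi_{T+1}\le n\exp\!\big(-\eta\sum_t\langle p_t,\ell_t\rangle+\eta^2 T/2\big)$, while for the best expert $i^\ast$ in hindsight $\Phi_{T+1}\ge w_{T+1}(i^\ast)=\exp\!\big(-\eta\sum_t\ell_t(i^\ast)\big)$. Taking logarithms of both inequalities and rearranging gives $\sum_t\langle p_t,\ell_t\rangle-\sum_t\ell_t(i^\ast)\le \frac{\log n}{\eta}+\frac{\eta T}{2}\le \frac{\log n}{\eta}+\eta T$, the first claimed bound.

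Next I would handle the randomized version, where $i_t$ is drawn independently from $p_t$. Let $\mathcal{F}_{t-1}$ be the $\sigma$-algebra generated by all randomness before round $t$; since the adversary is oblivious, $p_t$ is $\mathcal{F}_{t-1}$-measurable, so $X_t:=\ell_t(i_t)-\langle p_t,\ell_t\rangle$ satisfies $\E[X_t\mid\mathcal{F}_{t-1}]=0$ and $|X_t|\le 1$. Thus $(X_t)_{t}$ is a bounded martingale difference sequence, and Azuma--Hoeffding gives $\Pr\big[\sum_t X_t\ge \sqrt{2T\ln(1/\delta)}\big]\le\delta$. Combining with the deterministic bound, with probability $1-\delta$ we obtain $\sum_t\ell_t(i_t)-\sum_t\ell_t(i^\ast)\le \frac{\log n}{\eta}+\eta T+O\big(\sqrt{T\log(n/\delta)}\big)$, which is the second claimed bound (the extra factor of $n$ inside the logarithm only weakens the statement).

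Finally, for the corollary, plugging $\eta=\sqrt{\log n/T}$ into the deterministic term gives $\frac{\log n}{\eta}+\eta T=2\sqrt{T\log n}$ (this uses $\eta<1$, i.e.\ $T>\log n$, which holds in the relevant parameter range), and taking $\delta=1/\text{poly}(T)$ makes the concentration term $O(\sqrt{T\log(nT)})$ since $n$ and $T$ are polynomially related; the failure probability is $1/\text{poly}(T)$. For the space bound, rather than storing the (possibly tiny) weights, the algorithm stores the cumulative losses $L_t(i)=\sum_{s<t}\ell_s(i)\in[0,T]$ for each expert, from which $p_t$ is recovered by normalizing $e^{-\eta L_t(i)}$; each such value fits in $O(\log(nT))=O(1)$ words, for $O(n)$ words in total.

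I do not expect a genuine obstacle here, as this is a textbook result; the step requiring the most care is the concentration argument, where one must verify that $p_t$ is predictable with respect to the filtration — this is exactly where obliviousness of the adversary enters — so that $(X_t)_t$ is truly a martingale difference sequence. A secondary point of care is the space analysis, where one must store cumulative losses rather than raw weights to avoid precision blowup.
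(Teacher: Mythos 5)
Your proof is correct and is exactly the standard potential-function argument (plus Azuma--Hoeffding for the sampled-expert version) that the paper relies on by citation to \cite{aroraHK12, peng2023onlinesublinear} rather than reproving. The one step worth flagging is already handled properly: obliviousness of the adversary is what makes $\langle p_t,\ell_t\rangle$ the conditional mean of $\ell_t(i_t)$, so the martingale concentration applies.
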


\section{Algorithm}
\label{sec:algo}

\begin{algorithm}[ht!]
\caption{\label{alg:predict} \predictalg}
\begin{flushleft}
{\bfseries Input:} Buckets $\B = B_0, \ldots, B_{k-1}$, weights across buckets $\w$\\
{\bfseries Output:} Array of $k$ predictions (experts) $\preds$
\begin{algorithmic}
\State Initialize $\preds$ to be an array of size $k$
\For{$i \in \{0, \ldots, k-1\}$}
    \State Pick an expert $e_i$ from $B_i$ proportional to their weights
    \If{$i = 0$}
        \State $\preds_0 \gets e_i$
    \Else
        \State $(w, w') \gets \w_{i-1}$ are the weights for the current and previous buckets, respectively
        \State $\preds_i \gets e_i$ with probability $\frac{w}{w + w'}$
        \State Otherwise, $\preds_i \gets \preds_{i-1}$
    \EndIf
\EndFor
\State \Return $\preds$
\end{algorithmic}
\end{flushleft}
\end{algorithm}

\begin{algorithm}[ht!]
\caption{\label{alg:weights} \weightsalg}
\begin{flushleft}
{\bfseries Input:} Buckets $\B = B_0, \ldots, B_{k-1}$, weights across buckets $\w$, algorithm predictions $\preds$, losses $\losses$, block sizes $\{T_i\}_{i=0}^k$ \\
{\bfseries Output:} Updated buckets $\B$ and weights $\w$
\begin{algorithmic}
\For{$i \in \{0, \ldots, k-1\}$}
    \State $\eta_i^{(1)} \gets \sqrt{\log(|B_i|)/T_i}$
    \State $\eta_i^{(2)} \gets \sqrt{2/T_i}$
    \For{expert $e \in B_i$} \Comment{Update internal weights within buckets}
        \State Increment loss for $e$ in $B_i$ by $\ell_t(e)$ 
        \State MWU of corresponding weight in $B_i$ with rate $\eta_i^{(1)}$ and loss $\ell_t(e)$
    \EndFor
    \If{$i > 0$} \Comment{Update external weights between buckets}
        \State MWU of big weight in $\w_{i-1}$ with rate $\eta_{i-1}^{(2)}$ and loss $\ell_t(\preds_i)$
    \EndIf
    \State MWU of small weight in $\w_i$ with rate $\eta_i^{(2)}$ and loss $\ell_t(\preds_i)$
\EndFor
\State \Return $\B, \w$
\end{algorithmic}
\end{flushleft}
\end{algorithm}

\begin{algorithm}[ht!]
\caption{\label{alg:buckets} \bucketsalg}
\begin{flushleft}
{\bfseries Input:} Buckets $\B = B_0, \ldots, B_{k-1}$, weights across buckets $\w$, block sizes $\{T_i\}_{i=0}^k$, space parameter $m$, time $t$ \\
{\bfseries Output:} Updated buckets $\B$ and weights $\w$
\begin{algorithmic}[1]
\State $\eps \gets \log (T)/m$
\State $k' \gets \max \left\{i \in \{0, \ldots, k-1\}: t \equiv 0 \pmod {T_i}\right\}$ \Comment{Number of buckets to update}
\State $\tau \gets \max\{t' \leq t : t \equiv 0 \pmod{T_{k'+1}}\}$ \Comment{Most recent time $B_{k'+1}$ was updated}
\For{$i \in \{0, \ldots, k'\}$} \Comment{Evict experts, reset weights}
    \State Remove any expert $e$ from $B_i$ with arrival time at least $\tau+0.5$ if $\exists e' \in B_i: e'$ $\eps$-dominates $e$ (according to \cref{def:dominance}) \label{line:eviction}
    \State Set the weights of all remaining experts in $B_i$ to $1$ 
    \State Set the weights in $\w_i$ to $1$
\EndFor \label{line:end-of-eviction}

\For{$i \in  \{k', k'-1, \ldots, 1\}$} \Comment{Forwarding experts from smaller to larger buckets}
        \State Create copies of all of the experts in buckets $B_0, \ldots, B_{i-1}$ in $B_i$ with unique arrival times in $[t,t + 0.5)$, losses $0$, and weights $1$
\EndFor \label{line:end-of-forward}

\For{$i \in \{0, \ldots, k'\}$} \Comment{Sample new experts}
    \State Sample $m$ experts (with replacement) and add them to $B_i$ with unique arrival times in $[t + 0.5, t+1)$, losses $0$, and weights $1$ \label{line:sample}
\EndFor
\State \Return $\B, \w$
\end{algorithmic}
\end{flushleft}
\end{algorithm}

\begin{algorithm}[ht!]
\caption{\label{alg:hierbaseline} \mainalg}
\begin{flushleft}
{\bfseries Input:} Block sizes $T_0 < T_1 < \ldots < T_k=T$ with $T_i \equiv 0 \pmod{T_{i-1}}$ for $i \in [k]$, number of experts $n$, space parameter $m$
\begin{algorithmic}[1]
\State Create $k$ buckets $\B = B_0, \ldots, B_{k-1}$ which will each store tuples of the form (expert, arrival time, loss, weight) \footnote{Multiple copies of the same expert with different arrival times, losses, and weights may be stored in the same bucket (these are treated as different entries in the bucket).}
\State Create array $\w$ of $k-1$ pairs of weights (big weight, small weight) initialized to $1$
\State $\B \gets \bucketsalg(\B, \w, \{T_i\}_{i=0}^k, m, 0)$ \Comment{Initialize buckets with samples}

\For{time $t$ in $\{1,\ldots, T\}$} 
    \State $\preds \gets \predictalg(\B, \w)$
    \State \textit{Play expert} $\preds_{k-1}$ \Comment{Make prediction}
    \State Observe losses $\losses \gets \{\ell_t(e): e \in B\}$ \Comment{Query loss vector}
    \State $\B, \w \gets \weightsalg(\B, \w, \preds, \losses, \{T\}_{i=0}^k)$ \Comment{Update weights}
    \If{$t \equiv 0 \pmod{T_0}$}
        \State $\B \gets \bucketsalg(\B, \w, \{T_i\}_{i=0}^k, m, t)$ \Comment{Update buckets}
    \EndIf
\EndFor
\end{algorithmic}
\end{flushleft}
\end{algorithm}

\subsection{Algorithm Description}
The ``Baseline'' algorithm of \cite{peng2023onlinesublinear} splits the total time $T$ into smaller periods (buckets) and tracks the progress of a small number of experts within each bucket. The predictions of the algorithm are formed by running MWU on the experts within each bucket (reset between buckets).
At the end of each bucket, experts that have not been performing well under some particular definition are evicted and some new experts are sampled (see \cref{sec:technical_overview} for a more detailed description of their algorithm).
Our algorithm uses this underlying framework but with a new eviction rule and a hierarchy of buckets of different sizes.

The main algorithm is \cref{alg:hierbaseline} with key subroutines in Algorithms \ref{alg:predict}, \ref{alg:weights}, and \ref{alg:buckets}.
The algorithm takes as input a set of $k+1$ values $T_0 < T_1 < \ldots < T_k = T$ with $T_i \equiv 0 \pmod{T_{i-1}}$ for $i \in [k]$ which specify the block sizes in the hierarchy of $k$ buckets.
In addition, the algorithm receives the number of experts $n$, and a space parameter $m$ (the space used by the algorithm will ultimately be $O(k^3 m)$).

Each bucket maintains a set of $O(k^3 m)$ experts (see \cref{lem:baseline-space}) along with an arrival time, loss, and weight for each expert. The arrival time is the time in $[T]$ at which the expert arrived in the bucket, the loss is the sum of the losses of the expert since that arrival time, and the weight corresponds to a MWU subroutine being run on the experts within the bucket.
At a high level, each timestep, each bucket within the hierarchy outputs a predicted expert to follow. Using MWU recursively between the hierarchies (using weights $\w$), we determine which predicted expert to output as our final prediction. We then observe the true losses for all experts in all buckets and update the multiplicative weights in MWU within each bucket as well as the multiplicative weights between buckets. Finally, if we have reached the end of a given bucket, we update the experts in that bucket by evicting some experts, resetting weights, forwarding experts in small buckets to large buckets, and sampling some new experts. We detail the procedures for \emph{prediction, updating weights, and updating buckets} below.

\paragraph{Prediction (\cref{alg:predict})} The prediction is essentially formed from two types of weights. Each bucket maintains a set of `internal' weights over its pool of experts. In addition, the algorithm maintains an `external' pair of weights over each consecutive pair of buckets. Each bucket picks an expert $e_i$ from its pool proportional to its internal weights. We say the prediction at the smallest bucket is $y_0 = e_0$. Then, recursively from smaller to larger buckets, the prediction at the $i$th step $y_i$ is picked between the prediction of $e_i$ and $y_{i-1}$ proportional to the corresponding external weights.

\paragraph{Updating weights (\cref{alg:weights})}
For each bucket, we observe the losses of each expert in the bucket. We update the corresponding internal and external weights according to the multiplicative weight update rule.

\paragraph{Updating buckets (\cref{alg:buckets})}
If we reach a time where $t \equiv 0 \pmod{T_i}$, we update the entries in bucket $B_i$. We let $k'$ be the index of the largest bucket we update and let $\tau$ be the last time the bucket $B_{k'+1}$ was updated.
Updating buckets is comprised of four steps: eviction, resetting weights, forwarding, and sampling.
First, for each bucket to be updated, we evict all experts in the pool which have arrived since time $\tau + 0.5$ using the following dominance definition.
\begin{definition}[$\eps$-Dominance]\label{def:dominance}
Each expert $e$ tracked by the algorithm will have an arrival time and loss since that time. Given a parameter $\eps$, an expert $e'$ $\eps$-dominates $e$ if $e'$ arrives before $e$  and the loss of $e'$ is at most $(1+\eps)$ times the loss of $e$.
\end{definition}
Of the experts up for eviction, if they are dominated by any other expert in the pool using $\eps = \log(T)/m$, they are evicted.
Then, we reset the internal weights of all remaining experts in the pool as well as the external weights between buckets levels $i-1$ and $i$ to $1$.

Next, for each bucket that needs to be updated, we make copies of all the experts remaining after eviction from smaller buckets. We copy the experts but do not copy their arrival times or losses. Instead, we set their arrival times to the current time and set their losses to $0$.
Finally, we sample $m$ new experts. Both for forwarding and sampling of experts, we give each expert a unique arrival time in $[t,t+1)$ in order for the dominance rule to be well-defined.

\subsection{Analysis of improved baseline algorithm}
In this section, we analyze the improved baseline algorithm, \cref{alg:hierbaseline}, proving upper bounds on both its space usage and its regret. The following lemma bounds the space usage.

\begin{lemma}\label{lem:baseline-space}
\cref{alg:hierbaseline} uses $O(k^3 m)$ words of space.
\end{lemma}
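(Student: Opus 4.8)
The plan is to bound the number of (expert, arrival time, loss, weight) tuples stored in each bucket $B_i$ at all times, since the weights $\w$ only contribute $O(k)$ additional words and each tuple takes $O(1)$ words. The key structural claim is that, \emph{immediately after} a call to $\bucketsalg$ completes, each bucket contains at most $O(k^2 m)$ tuples, and that no bucket ever grows beyond this between such calls (in fact between calls the bucket contents are static except for loss/weight updates, which do not change the count). I would prove this by induction on the hierarchy level, tracking separately the three ways a tuple can enter a bucket: surviving eviction, being forwarded from smaller buckets, and being freshly sampled.

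First I would establish the core counting fact about eviction: after line~\ref{line:eviction} of $\bucketsalg$, the number of experts in $B_i$ with arrival time at least $\tau + 0.5$ is at most $O(1/\eps) = O(m/\log T) = O(m)$. This is exactly the argument sketched in Section~\ref{sec:improved_domination}: order these surviving experts by arrival time; for any consecutive pair (in arrival order) the earlier one does not $\eps$-dominate the later one, so the later expert's accumulated loss is strictly less than $1/(1+\eps)$ times the earlier one's. Losses lie in $[0,T]$ (and are at least, say, $1/T$ or we can treat zero-loss experts separately — there can be at most one expert of each arrival-era with loss $0$ that is never dominated, which is a negligible additive term), so iterating the contraction gives at most $\log_{1+\eps} T = O(\log(T)/\eps) = O(m)$ such experts. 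Experts with arrival time before $\tau+0.5$ are never up for eviction in this call, but I would argue inductively that there are at most $O(k^2 m)$ of them because they were all placed there (by forwarding or sampling) during the single current $T_{k'+1}$-block and the same bound applied at the start of that block.

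Next I would handle forwarding and sampling. Sampling adds exactly $m$ experts to each updated bucket. Forwarding into $B_i$ copies every expert remaining (after eviction) in $B_0,\dots,B_{i-1}$; by the inductive hypothesis each of those $i \le k$ buckets has $O(k^2 m)$ experts right after its own eviction step, so forwarding contributes $O(k^3 m)$ — wait, here I need to be careful: the forwarding loop runs from $i = k'$ down to $1$ and copies the \emph{current} post-eviction, post-forwarding contents of smaller buckets, so I should process it in the stated order and note that $B_j$ for $j < i$ has already received its forwarded copies; thus the right bound to carry is that each bucket holds $O(k^2 m)$ tuples and forwarding into $B_i$ brings in $\sum_{j<i} |B_j| = O(k \cdot k^2 m)$. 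This suggests the true bound is $O(k^3 m)$ per bucket after all operations, consistent with the lemma statement; I would set up the induction so the invariant is ``$|B_i| \le c\, k^2 m$ immediately after eviction'' and ``$|B_i| \le c\, k^3 m$ at all other times,'' checking the constants close.

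The main obstacle I anticipate is getting the bookkeeping of the induction exactly right, in particular pinning down which buckets' contents the forwarding step sees (pre- or post-forwarding for the smaller buckets) and confirming that experts forwarded into a bucket during a $T_{k'+1}$-block — and hence protected from eviction until time $T_{k'+1}$ — cannot accumulate across multiple lower-level blocks beyond the claimed bound. Concretely I must verify that within one $T_{k'+1}$-block, the bucket $B_{k'}$ receives forwarded experts only at the $T_{k'}$-boundaries internal to that block, but at each such boundary the protected (old) experts are exactly those forwarded/sampled since the last $T_{k'+1}$-boundary, and the count of newly-arrived-since-$\tau$ experts that survive eviction is $O(m)$ by the contraction argument, so the total is $O((T_{k'+1}/T_{k'}) \cdot m)$ in the worst naive count — I would need the forwarding-protection interplay to collapse this to $O(k^2 m)$, which is the delicate point. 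Everything else (weights contributing $O(k)$ words, tuples being $O(1)$ words each, the bound being stable between $\bucketsalg$ calls) is routine.
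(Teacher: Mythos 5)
You have the right core ingredient --- the domination-filter count showing that at most $O(\log_{1+\eps}T)=O(\log(T)/\eps)=O(m)$ of the experts actually subject to eviction on line~\ref{line:eviction} can survive --- but the proof does not close, and you flag the failure point yourself: you never bound the number of \emph{protected} experts (those with arrival time before $\tau+0.5$), and your fallback count of $O((T_{k'+1}/T_{k'})\cdot m)$ is polynomial in $T$, not in $k$ and $m$. Two observations are needed here, and both are what the paper's proof relies on. First, the accumulation you worry about cannot occur: an expert forwarded or sampled into $B_i$ at an update time $t'$ strictly inside the current $T_{k'+1}$-block receives an arrival time in $[t',t'+1)$ with $t'\geq\tau+T_i$, hence arrival time at least $\tau+0.5$; so at the end of that block \emph{all} such experts are subject to the filter and are compressed to $O(m)$ in aggregate, rather than contributing $O(m)$ per internal boundary. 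The only protected experts at time $t$ are those already in $B_i$ after the eviction at time $\tau$ itself, plus those forwarded into $B_i$ exactly at time $\tau$. Second, to count the protected experts the paper groups the post-eviction population of $B_i$ by levels: for each $j>i$ it charges at most $O(m)$ experts to the last update round whose maximal updated level was $j$, concluding that $B_i$ holds $O((k-i)m)$ experts after eviction. Your proposal never performs this grouping, which is why you are reduced to guessing an invariant of ``$O(k^2m)$ after eviction'' and conceding the delicate point is open. Your related claim that all protected experts ``were placed there during the single current $T_{k'+1}$-block'' is also backwards: experts placed strictly inside the block are precisely the unprotected ones, while protected experts may predate the block by many rounds.

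Separately, even if your invariants held, the arithmetic would not reach the lemma: $k$ buckets each holding $O(k^3m)$ tuples gives $O(k^4m)$ total, a factor $k$ worse than claimed. The paper's tighter accounting is $O((k-i)m)$ per bucket after eviction, hence $O\bigl(\sum_{j\leq i}(k-j)m\bigr)=O(kmi)$ after forwarding (the forwarding loop runs top-down, so $B_i$ receives the post-eviction, pre-forwarding contents of $B_0,\dots,B_{i-1}$), plus $m$ sampled experts, and summing over $i$ gives $O(k^3m)$. The peripheral points you make --- weights costing $O(k)$ words, each tuple costing $O(1)$ words, counts being static between calls to \bucketsalg, and the treatment of zero-loss experts in the contraction argument --- are fine and consistent with the paper.
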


\begin{proof}
The space required by the algorithm is dominated by the number of entries in the buckets $\B$. For each entry in the bucket, we store a constant number of words, and in total, we query a number of experts at each timestep which is at most the number of entries in $\B$. Thus the number of words stored at any timestep $t$ can be uppper bounded by a fixed constant factor times the number of entries in $\B$.

Consider any timestep $t$ and any bucket $B_i$ such that $t \equiv 0 \pmod{T_i}$, i.e., $B_i$ will be updated in \cref{alg:buckets}. Recall that $\tau$ is the start time of the smallest bucket which is not being updated. We will argue that after the eviction step on line \ref{line:end-of-eviction}, $B_i$ will contain at most $O(m)$ experts that have arrived since time $\tau + 0.5$. 
Since all experts have unique arrival times and by \cref{def:dominance} of $\eps$-Dominance, every expert $e$ remaining after eviction must have less than a $1/(1+\eps)$ fraction of the loss of any remaining expert that arrived before $e$. As the losses are in $[0,1]$, there can be at most $\log_{1+\eps}(T) = O(\log(T)/\eps) = O(m)$ remaining experts out of the pool of experts up for eviction (those that have arrived since $\tau + 0.5$).

For each $j > i$, $B_i$ may contain up to $O(m)$ experts which survived the last round of eviction during which $k' = j$ ($B_j$ was the largest bucket to be updated) via the argument above. Any other experts are considered for eviction in the current step and so the total number of experts after eviction in $B_i$ is at most $O((k-i)m)$.

After forwarding on line \ref{line:end-of-forward}, $B_i$ has received experts which survived eviction from all $B_j$ for $j < i$. In total, the number of experts in $B_i$ after this step is
\[
    O\left(\sum_{j = 0}^{i} (k-j)m \right) = O(kmi).
\]
Sampling on line \ref{line:sample} adds $m$ extra entries.
The total number of entries in all buckets is therefore
\[
    O\left(\sum_{i=0}^{k-1} kmi \right) = O(k^3m). \qedhere
\]
\end{proof}

We next proceed to analyze the regret of the baseline algorithm. Our goal is to prove the following theorem.
\begin{theorem}\label{thm:baseline-bound}
For any constant $\gamma>0$, with probability $1-O(T^{-\gamma})$, the regret of~\cref{alg:hierbaseline} is
    \begin{align}
        \tilde O\left(\frac{T_k}{m} + \frac{T_k}{\sqrt{T_{k-1}}}+\frac{n}{m}\left(\sum_{i=0}^{k-1} \frac{T_{i}}{\sqrt{T_{i-1}}}\right)\right),
    \end{align}
    where the $\tilde O$ hides logarithmic factors in $T$. Here, we have defined $T_{-1}=1$ for convenience.
\end{theorem}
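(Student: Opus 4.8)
The plan is to bound the algorithm's regret against the single best expert $e^*$, and the skeleton follows the \stay/\evict dichotomy of \cite{peng2023onlinesublinear}, but carried out recursively through the $k$-level hierarchy. First I would fix a high-probability ``good event'' $\mathcal E$ on which all randomized sub-components behave as intended: (i) by \cref{lem:baseline-space} every bucket has $|B_i|=\text{poly}(n)$, so every logarithmic factor appearing in \cref{lem:mwu} is $\tilde O(1)$, and on $\mathcal E$ the high-probability regret bound of \cref{lem:mwu} holds simultaneously for all $\text{poly}(T)$ MWU instances run by the algorithm --- the internal MWU inside each $B_i$ reset every $T_i$ steps, and the external MWU between each consecutive pair of buckets reset every $T_i$ steps; (ii) on $\mathcal E$, within any $\Theta((n/m)\log T)$ consecutive calls to the sampling step (line~\ref{line:sample}) at least one call draws $e^*$. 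A Chernoff bound and a union bound over $\text{poly}(T)$ events give $\Pr[\mathcal E]\ge 1-O(T^{-\gamma})$, and we condition on $\mathcal E$ from here on.

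Next I would recall the structural dichotomy. For a level $i$ and a $T_i$-block $I$, call $I$ a \stay-block if, were $e^*$ inserted into $B_i$ at the start of $I$ with a fresh arrival time, it would never be $\eps$-dominated and evicted, and an \evict-block otherwise. Two facts drive the analysis. (a) If $I$ is a \stay-block at level $i$ and the sample at the start of $I$ actually contains $e^*$, then $e^*$ survives in $B_i$ until the end of the enclosing $T_{i+1}$-block, whereupon it is forwarded up; if all enclosing blocks at levels $i{+}1,\dots,k{-}1$ are also \stay-blocks, the forwarded copy climbs all the way to $B_{k-1}$ and, being the top bucket and in a \stay-block, remains there forever. (b) If $I=[s,t_I)$ is an \evict-block and $e'$ is the expert that would evict $e^*$ at time $t'$, then $e'$ lies in $B_i$ at time $s$, remains in $B_i$ through $t'$ --- this is precisely where the ``forwarded experts cannot be evicted until the enclosing large block ends'' rule is used, without which the competitive interval of $e'$ need not span whole large blocks, the pathology illustrated by the $[1,0,0,0,1]$ example of \cref{sec:technical_overview} --- and $e'$'s loss over $[s,t')$ is at most $(1+\eps)$ times $e^*$'s loss over $[s,t')$; moreover the internal MWU of $B_i$ is competitive with $e'$ over every $T_i$-block in $[s,t')$, while the external MWU of bucket $i$ is reset every $T_i$ steps.

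The bound itself I would obtain by a recursive, left-to-right charging argument over the hierarchy. Walking a cursor from time $1$ to $T$: at the $T_{k-1}$-block currently under the cursor, if it is an \evict-block, use fact (b) to charge its interval $[s,t')$ a regret of $\eps\cdot(\text{loss of }e^*\text{ on }[s,t'))+\tilde O((t'-s)/\sqrt{T_{k-1}})$ (internal plus external MWU, the latter telescoped through levels $k{-}1,\dots$) and advance the cursor to $t'$; if it is a \stay-block whose start sample drew $e^*$, fact (a) finishes the whole horizon with an extra $\tilde O(T_k/\sqrt{T_{k-1}})$; and if it is a \stay-block whose sample missed $e^*$ --- on $\mathcal E$ there are at most $\tilde O(n/m)$ of these before the first hit at this level, which is a fresh $m/n$ chance each time and, when it lands, is a global success by (a) --- recurse one level down, bounding the regret of $\preds_{k-1}$ against $e^*$ on this block by $\tilde O(\sqrt{T_{k-1}})$ (external MWU at level $k{-}1$) plus the regret of $\preds_{k-2}$ against $e^*$ on this block, which is handled by the identical three-way split among the $T_{k-2}$-sub-blocks it contains, and so on down to level $0$, where a missed-sample \stay-block simply costs its full length $T_0$. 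Unrolling: the \evict-intervals at a fixed level are pairwise disjoint (the cursor jumps past each), so their $\eps$-slack summed over all $k$ levels is at most $\eps\cdot k\cdot T_k=\tilde O(T_k/m)$; the external-MWU terms summed over the $\le\tilde O(n/m)$ bad \stay-blocks at level $i$ total $\frac nm\cdot\tilde O(T_{i+1}/\sqrt{T_i})$, which over $i=0,\dots,k{-}2$ gives $\frac nm\sum_{j=1}^{k-1}\tilde O(T_j/\sqrt{T_{j-1}})$; the $\le\tilde O(n/m)$ bad \stay-blocks at level $0$ cost $\frac nm\cdot T_0=\frac nm\cdot T_0/\sqrt{T_{-1}}$; and the success tail plus the top-level external MWU contribute $\tilde O(T_k/\sqrt{T_{k-1}})$. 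Summing these pieces yields the stated bound.

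The main obstacle is making facts (a)--(b) and the recursion airtight. One must verify that the forwarding mechanism together with the no-eviction guarantee really do force a sampling hit --- wherever in the nested stack of \stay-blocks it occurs --- to propagate $e^*$ all the way to $B_{k-1}$ and keep it there, and that an evicting expert in a small block is genuinely competitive with $e^*$ over a union of whole larger blocks. One must also organize the counting so that the number of bad \stay-blocks remains $\tilde O(n/m)$ at \emph{each} level rather than compounding to $(n/m)^k$ across the $k$ levels, which works because the first sampling hit inside any nested stack of \stay-blocks is a \emph{global} success and hence caps the bad-block count at every level at once. This bookkeeping, rather than any single inequality, is where the work lies.
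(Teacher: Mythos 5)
Your proposal follows essentially the same route as the paper: the same \stay/\evict dichotomy with a recursive descent through the levels inside missed \stay-blocks, the same global $\tilde O(n/m)$ count of \stay-blocks (the first sampling hit terminates the whole procedure, so the counts do not compound across levels), the same charging of \evict-intervals to an $\eps$-competitive surviving expert kept alive by the forwarding/no-eviction rule, and the same telescoping of internal and external MWU costs --- the paper merely packages this charging through an explicit ``synthetic algorithm'' and an induction on levels. The plan is correct.
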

Since we assume that the adversary is oblivious, it suffices to fix any loss sequence $\ell_1,\dots,\ell_T\in [0,1]^n$ and show that with high probability, the algorithm achieves the above regret bound on this loss sequence. We will first set up some notation. Then we introduce the terminology of \emph{\stay blocks} and \emph{\evict blocks} and state a series of technical lemmas on the structure of these. Finally, we demonstrate how the theorem follows from these lemmas. 

\paragraph{Notation}
We denote the optimal expert over the $T$ days with loss sequence $\ell_1,\dots,\ell_T\in [0,1]^n$ by $e^*$. For an expert $e$ and an interval $I\subseteq [T]$, we denote by
\[
L_{e}(I)=\sum_{t\in I}\ell_t(e)
\]
the loss of expert $e$ over time $I$. For two experts $e$ and $e'$, we denote by
\[
R_{e,e'}(I)=L_{e}(I)-L_{e'}(I)
\]
the regret of $e$ relative to expert $e'$ over time $I$. We will solely apply this definition with $e'=e^*$. More generally, for two algorithms $\mathcal{A}$ and $\mathcal{A}'$ which plays sequences $(i_1,\dots,i_T)$ and $(i_1',\dots,i_T')$, and an interval $I\subseteq [T]$, we denote by
\[
R_{\mathcal{A},\mathcal{A}'}(I)=\sum_{t\in I}\ell_t(i_t)-\sum_{t\in I} \ell_t(i_t'),
\]
which we refer to  as the the regret of algorithm $\mathcal{A}$ relative to algorithm $\mathcal{A}'$ over time $I$.

\paragraph{Definition of \stay blocks and \evict blocks} 
\begin{figure}[ht]
    \centering
    \includegraphics[width=\textwidth]{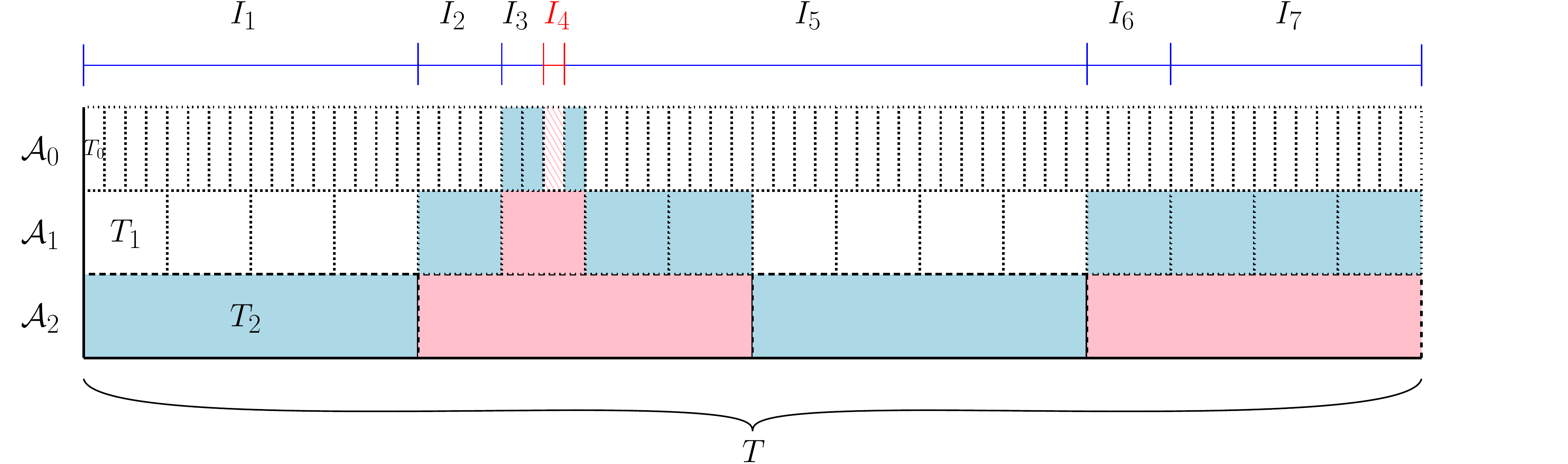}
    \caption{An instance of our algorithm with $k=3$. At level $i$, time is partitioned into intervals of $T_i$ days. The red blocks in the figure are the blocks appearing in $\mathcal{S}$. The intervals on top are the intervals defined during the construction of $\mathcal{S}$.  Each interval $I_j$ corresponds to an expert $e_j$ with low regret compared to the optimal expert $e^*$ (except for the \emph{marked} interval $I_4$ which corresponds to any expert present in the corresponding block at level $0$). 
    For an interval $I_j$ with corresponding expert $e_j$, a blue block underneath $I_j$ at some level $i$ indicates that expert $e_j$ is present in the pool of level $i$ during the days of the block.}
    \label{fig:redbluediagram}
\end{figure}
Recall that~\cref{alg:hierbaseline} uses $k$ recursive levels, denoted levels $0,\dots,k-1$. At level $i$, the $T=T_k$ days are partitioned into blocks of $T_i$ days where $T_0<\cdots<T_k$. To be precise, for a given level $i$ and $j\in [T/T_i]$, we define $B_{i,j}=\{(j-1) T_i+k\mid 1\leq k\leq T_i\}$ which is the $j$'th block of days at level $i$. We define $P'_{i,j}$ to be the experts present in the pool of level $i$ immediately prior to the days in block $B_{i,j}$ after the eviction process and the forwarding of experts between levels but before the sampling of the $m$ new experts. Let us further define $P_{i,j}$ to be the level $i$ pool obtained after sampling the $m$ new experts and including them in $P_{i,j}'$. We give these new experts distinct arrival times such that the eviction rule is well-defined among them. For the analysis, we will simply think of this sampling as being done one expert at a time\footnote{Recall that we sample with replacement from the set of all experts. If an expert which is already contained in the pool gets sampled, we include a new copy of this expert in the pool. At any point, we can therefore have multiple copies of an expert in the pool, but they will be assigned different arrival times times and loses since their arrival. For the evictions, we view these experts as being different. In particular, an expert can even get evicted by a copy of itself that was sampled earlier in time.}. Let $P_{i,j}'=P_{i,j}^{(0)}\subset P_{i,j}^{(1)}\subset \cdots P_{i,j}^{(m)}=P_{i,j}$ be the (multi-)sets of experts in the level $i$ pool after including each new sampled expert in their given arrival order. 

We start by describing a partition of the blocks $B_{i,j}$ into two types, namely \emph{\stay-blocks} and \emph{\evict-blocks}. This partitioning will only depend on the random sampling of experts in the $k$ levels of our algorithm, and not on the outcomes of the various MWU algorithms that we run as subroutines. Consider any fixed level $i$, any $j\in[T/T_i]$ and the pool of experts $P_{i,j}'$. Before sampling each of the $m$ new experts to be added to the level $i$ pool $P_{i,j}'$, we can ask the following hypothetical question: \emph{If we sample the globally optimal expert $e^*$, would it stay in the pool at level $i$ until the end of time, or would it be evicted by some expert $e$}? Since the loss sequence is fixed in advance and since an expert can only be evicted by another expert with an earlier arrival time, the answer to this question only depends on the set of experts in the pool immediately before the sampling of the new expert.  

We will say that the block $B_{i,j}$ is a \emph{\stay-block}, if for each $0\leq t\leq m-1$, it holds that \emph{if} $e^*$ is sampled as the next expert to form $P_{i,j}^{(\ell+1)}$ 
from $P_{i,j}^{(\ell)}$, then $e^*$ will stay in the level $i$ pool until the end of time. If a block is a \stay-block and the optimal expert $e^*$ is in fact sampled (and thus will stay to the end), we say that the \stay-block is \emph{actualized}.
If $B_{i,j}$ is not a \stay-block, we say that it is an \emph{\evict-block}. It is clear that whether or not a given block is a  \stay-block or an \evict-block, depends only on the random sampling of experts. Importantly, we note that an \evict-block $B_{i,j}$ at level $i$ gives rise to an expert $e$ and an interval of days $I=\{(j-1)T_i+1,(j-1)T_i+2,\dots\}$ such that $L_e(I)\leq (1+\eps) L_{e^*}(I)$ and expert $e$ exists in the level $i$ pool during the days in $I$. Indeed, we can let $e$ be any expert in the pool that would evict $e^*$ and let $I$ denote the sequence of days until this eviction would happen. Then by the eviction rule, $L_e(I)\leq (1+\eps) L_{e^*}(I)$. 
\\
\\
We will next specify a subset $\mathcal{S}$ of the \stay-blocks. While defining the blocks in $\mathcal{S}$, we will further define certain intervals of time $I_1,\dots, I_r$ and experts $e_1,\dots, e_r$. We will further \emph{mark} some of these intervals. All of these quantities are random variables depending only on the random sampling of experts that the algorithm performs in its $k$ layers. To define them, we use the following inductive procedure illustrated in~\cref{fig:redbluediagram}: Initially, we set $t=1$ 
and let $\ell=k-1$.
Suppose recursively that for some $r_0\geq 1$ we have defined $I_1,\dots,I_{r_0-1}$ and $e_1,\dots, e_{r_0-1}$. Suppose further that we are given $t$ and $\ell$ in a way such that at time $t$ a new block of days begins at level $\ell$. Denote this block $B(t,\ell)=\{t,t+1,\dots, t+T_\ell-1\}$. We consider two cases.
\begin{itemize}
\item {\bf $B(t,\ell)$ is an \evict-block.} By the observations above, there exists an interval of days $I$ containing $B(t,\ell)$, and an expert $e$ which is contained in the level $\ell$ pool during the days in $I$ such that $L_e(I)\leq (1+\eps) L_{e^*}(I)$. We define $I_{r_0}:= I$ and $e_{r_0}:= e$. We further update $t\leftarrow t+T_\ell$. Finally, we update $\ell$ to be maximal in $\{0,1,\dots, k-1\}$ such that a new block begins at level $\ell$ at time $t$.
\item {\bf $B(t,\ell)$ is a \stay-block.} We include $B(t,\ell)$ in $\mathcal{S}$, and split into further cases:
\begin{enumerate}
\item \emph{$B(t,\ell)$ is actualized:} This means that $e^*$ gets sampled to the level $\ell$ pool at time $t$ and consequently will stay to the end of time (since $B(t,\ell)$ is a \stay-block). We set $e=e^*$ and $I=[t,t+1,\dots, T]$ and terminate the procedure.
\item \emph{$B(t,\ell)$ is not actualized:} If $\ell=0$, we define $I_{r_0}=B(t,0)$ and let $e_{r_0}=e$ to be an arbitrary expert in the level $0$ pool during the days $B(t,0)$. We further \emph{mark} the interval $I_{r_0}$. We update $t\leftarrow t+T_0$. Finally, we update $\ell$ to be maximal in $\{0,1,\dots, k-1\}$ such that a new block begins at level $\ell$ at time $t$. If on the other hand $\ell\neq 0$, we keep $t$ unchanged, update $\ell\leftarrow \ell-1$ and proceed to the next step.
\end{enumerate}
\end{itemize}
We will next prove a series of technical lemmas regarding the structure of the \stay blocks in $\mathcal{S}$, on the intervals $I_1,\dots,I_r$, and on the experts $e_1,\dots,e_r$. We first provide a high probability upper bound on  the size of $\mathcal{S}$.

\begin{lemma}\label{lemma:bound-on-yes}
For any $\delta \in (0,1)$, with probability at least $1-\delta$, it holds $|\mathcal{S}|\leq \frac{n}{m}\log(1/\delta)$.
\end{lemma}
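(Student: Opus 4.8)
The plan is to realize $|\mathcal{S}|$ as dominated by a geometric random variable with success probability $1-(1-1/n)^{m}$ and then apply the usual geometric tail bound; since $1-(1-1/n)^{m}=\Omega(m/n)$ for $m\le n$, this is exactly what is needed.

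First I would lay bare the combinatorial skeleton of the construction of $\mathcal{S}$. The inductive procedure walks through positions $(t,\ell)$ in which $t$ is non-decreasing and, for each fixed $t$, the level $\ell$ strictly decreases, so every position is visited at most once. Each block placed in $\mathcal{S}$ is a \stay-block, and whenever the walk arrives at such a block $B(t,\ell)$ the algorithm (line~\ref{line:sample} of \bucketsalg) injects a \emph{fresh} batch of $m$ experts, drawn independently and uniformly from $[n]$, into the level-$\ell$ pool; as distinct blocks of $\mathcal{S}$ occupy distinct positions, their batches come from distinct sampling steps and are mutually independent. Finally, the walk stops the first time it reaches an \emph{actualized} \stay-block — and, since for a \stay-block the expert $e^*$ stays forever once it is sampled, ``actualized'' is equivalent to ``$e^*$ lies in the block's fresh batch'' — the only other stopping condition being that $t$ exceeds $T$, which just shrinks $\mathcal{S}$. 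Hence $|\mathcal{S}|\ge s$ forces the fresh batches of the first $s-1$ \stay-blocks of $\mathcal{S}$ all to miss $e^*$.

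The crux — and the step I expect to be the main obstacle — is the ``deferred decisions'' statement: conditioned on the walk reaching its $j$-th \stay-block, on this \stay-block occupying a prescribed position $(t_0,\ell_0)$, and on the batches of the first $j-1$ \stay-blocks of $\mathcal{S}$, the batch injected at $(t_0,\ell_0)$ misses $e^*$ with probability at most $(1-1/n)^m$. Two ingredients go into this. The first is that the walk's entire trajectory before $(t_0,\ell_0)$ is determined by earlier batches: the transition out of a non-actualized position is a deterministic function of that position, the fixed block sizes, and its \stay/\evict label, and every position visited before $(t_0,\ell_0)$ lies at a time $<t_0$, or at time $t_0$ but at a level $>\ell_0$; since in \bucketsalg new experts enter a block only after its eviction and forwarding steps, and every sampled or forwarded expert is given a fresh arrival time in the unit interval beginning at its creation time, the hypothetical copy of $e^*$ defining any earlier position's label is inserted at a time $\le t_0$ (so with arrival time $<t_0+1$), whereas every descendant of the batch at $(t_0,\ell_0)$ reaches that pool only with a strictly larger arrival time and hence — $\eps$-dominance requiring a strictly earlier arrival time — cannot change whether that copy survives. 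The second ingredient is a monotonicity observation: being a \stay-block becomes no harder the sooner $e^*$ first appears in the batch, so the event ``$(t_0,\ell_0)$ is a \stay-block'' is not positively correlated with ``$e^*$ is absent from the batch''; consequently conditioning on it as well leaves the miss probability at most $(1-1/n)^m$.

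Granting the deferred-decisions statement, one step of conditioning gives $\Pr[|\mathcal{S}|\ge j]\le (1-1/n)^m\,\Pr[|\mathcal{S}|\ge j-1]$, so that $\Pr[|\mathcal{S}|\ge s]\le (1-1/n)^{m(s-1)}\le e^{-m(s-1)/n}$. Taking $s$ to be the least integer with $s-1\ge \tfrac{n}{m}\log(1/\delta)$ makes the right-hand side at most $\delta$, which gives $|\mathcal{S}|\le \tfrac{n}{m}\log(1/\delta)$ (up to the rounding of $s$) with probability at least $1-\delta$.
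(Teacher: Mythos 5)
Your proposal follows the same route as the paper's (much terser) proof: each \stay-block in $\mathcal{S}$ contributes a fresh batch of $m$ independent uniform samples, the procedure halts once $e^*$ is hit, and a geometric tail bound finishes. Your ``first ingredient'' --- that the walk's trajectory up to a position $p$ is determined by batches other than the one injected at $p$, via the arrival-time/forwarding argument --- is exactly the deferred-decisions content the paper leaves implicit, and it is correct.

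The one weak link is your ``monotonicity observation.'' Because you condition on the event ``$(t_0,\ell_0)$ \emph{is} a \stay-block,'' which itself depends on the first $m-1$ samples of that very batch, you are forced to argue the conditioning does not inflate $\Pr[e^*\notin\text{batch}]$. But the claim that being a \stay-block is no harder when $e^*$ appears early is dubious under the paper's literal definitions: two copies of $e^*$ sampled in the same batch accrue identical losses since arrival (arrival times differ only within a unit interval), so the earlier copy $\eps$-dominates a hypothetical later one, and an actual early sample of $e^*$ can therefore \emph{falsify} the stay condition for later indices $\ell$. If anything, this pushes the correlation in the wrong direction for your conditional bound. The fix is to drop the conditioning on ``is a \stay-block'' altogether: bound the joint event. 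Writing $F_p$ for ``the walk reaches $p$ having seen exactly $j-1$ prior \stay-blocks'' and $G_p$ for ``$e^*$ is absent from the batch at $p$,'' the event ``the $j$-th \stay-block is at $p$ and is non-actualized'' is contained in $F_p\cap G_p$; your first ingredient shows $F_p$ is measurable with respect to samples outside $p$'s batch, so $\Pr[F_p\cap G_p]\le (1-1/n)^m\Pr[F_p]$, and summing over the disjoint events $F_p$ yields the chained inequality $\Pr[|\mathcal{S}|\ge j]\le (1-1/n)^m\Pr[|\mathcal{S}|\ge j-1]$ directly, with no monotonicity needed. With that substitution your argument is complete and matches the paper's bound.
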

\begin{proof}
Each time we encounter a \stay-block in the procedure above, we get $m$ chances to actually sample the optimal expert $e^*$. If we do indeed sample $e^*$, the procedure ends and we add no more blocks to $\mathcal{S}$. Since for each sample, the probability of sampling $e^*$ is $1/n$, we get that 
\[
\Pr[|\mathcal{S}|\geq N+1]\leq (1-1/n)^{Nm}\leq \exp\left(-\frac{Nm}{n}\right).
\]
Choosing $N=\frac{n}{m}\log(1/\delta)$, we obtain the desired result. 
\end{proof}
For the next lemma, we will consider the \emph{synthetic} algorithm $\mathcal{A}^{synth}$ which for $i=1,\dots, r$ plays expert $e_i$ during the days of $I_i$. This algorithm cannot be implemented as it would require knowledge of the future performance of experts, but we will use it merely for the analysis.
\begin{lemma}\label{lemma:regret-of-A0}
For any $\delta \in (0,1)$, with probability at least $1-\delta$ (over the random sampling of experts done by~\cref{alg:hierbaseline}), algorithm $\mathcal{A}^{synth}$ has regret at most $\eps T+\frac{n}{m}T_0\log(1/\delta)$.
\end{lemma}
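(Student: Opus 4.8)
The plan is to write the regret of $\mathcal{A}^{synth}$ as $\sum_{j=1}^{r}R_{e_j,e^*}(I_j)$ (this equals its regret against the best expert, since $e^*$ is optimal over all $T$ days and the $I_j$ partition $[1,T]$) and then split the intervals $I_1,\dots,I_r$ according to how they were created in the inductive construction of $\mathcal{S}$: those arising from \evict-blocks, the \emph{marked} intervals, and at most one terminal interval arising when the procedure ends at an actualized \stay-block. The construction advances its running time variable exactly so that the $I_j$ partition $[1,T]$, so I would bound the regret contributed by each of the three groups separately and add them up.

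For an $I_j$ arising from an \evict-block, the construction records that the associated expert $e_j$ is one that would evict $e^*$ and that $I_j$ runs precisely up to that eviction; applying the eviction rule at that time gives $L_{e_j}(I_j)\le(1+\eps)L_{e^*}(I_j)$, hence $R_{e_j,e^*}(I_j)=L_{e_j}(I_j)-L_{e^*}(I_j)\le \eps L_{e^*}(I_j)$. Since the $I_j$ are pairwise disjoint subsets of $[T]$, summing over all \evict-intervals contributes at most $\eps\sum_{t\in[T]}\ell_t(e^*)=\eps L_{e^*}([T])\le \eps T$. If a terminal interval $I_r=[t,T]$ is present, then $e_r=e^*$ and it contributes $R_{e^*,e^*}(I_r)=0$.

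For the marked intervals, each one is a level-$0$ block $B(t,0)$, so it spans exactly $T_0$ days and contributes regret at most $T_0$ since losses lie in $[0,1]$. To bound their number, note that a marked interval is created only at a (non-actualized) \stay-block at level $0$, and by the construction every \stay-block encountered — in particular every such one — is placed in $\mathcal{S}$; hence the number of marked intervals is at most $|\mathcal{S}|$. Now invoke \cref{lemma:bound-on-yes}: with probability at least $1-\delta$ over the random sampling of experts, $|\mathcal{S}|\le\frac{n}{m}\log(1/\delta)$, so on this event the marked intervals contribute at most $T_0|\mathcal{S}|\le\frac{n}{m}T_0\log(1/\delta)$. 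Adding the three bounds, on the same probability-$(1-\delta)$ event the regret of $\mathcal{A}^{synth}$ is at most $\eps T+\frac{n}{m}T_0\log(1/\delta)$.

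Randomness enters only through the appeal to \cref{lemma:bound-on-yes} for the count of marked intervals; everything else is a deterministic accounting over the partition $I_1,\dots,I_r$. The one point I would be careful about is that the $\eps$-domination inequality $L_{e_j}(I_j)\le(1+\eps)L_{e^*}(I_j)$ is applied over the \emph{whole} interval $I_j$ — this is exactly how $e_j$ and $I_j$ were selected in the construction of $\mathcal{S}$, so no extra work is needed, but it is precisely this (rather than a weaker per-block statement) that keeps the first term at $\eps T$.
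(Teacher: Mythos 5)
Your proposal is correct and follows essentially the same argument as the paper: decompose the regret over the partition $I_1,\dots,I_r$, use the $\eps$-domination inequality on the unmarked intervals to get the $\eps T$ term, bound each marked (level-$0$) interval trivially by $T_0$, and invoke \cref{lemma:bound-on-yes} to count the marked intervals. The only cosmetic difference is that you separate out the terminal interval (which the paper folds into the unmarked case) and sum $\eps L_{e^*}(I_j)$ directly rather than passing through $\eps|I_j|$; both yield the same bound.
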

\begin{proof}
We note that the intervals $I_1,\dots,I_r$ partition $[T]$. During any unmarked interval $I_{j}$, algorithm $\mathcal{A}^{synth}$ plays and expert $e_j$ such that $L_{e_j}(I_j)\leq (1+\eps) L_{e^*}(I_j)$. During a marked interval $I_{j}$, we use the trivial upper bound $L_{e_j}(I_j)\leq |T_0|$. This follows since only intervals at level $0$ are marked. By~\cref{lemma:bound-on-yes}, the number of marked intervals is at most $\frac{n}{m}\log(1/\delta)$ with probability at least $1-\delta$. Let $J_0=\{j\in [r] \mid \text{ $j$ is unmarked}\}$ and $J_1=[r]\setminus J_0$. Note that for each $j\in J_0$,
\[
R_{e_j,e^*}(I_j) = L_{e_j}(I_j)-L_{e^*}(I_j)\leq \eps L_{e^*}(I_j)\leq \eps|I_j|.
\]
Thus, with probability at least $1-\delta$, the regret of algorithm $\mathcal{A}^{synth}$ is at most
\[
|J_1| |T_0|+\sum_{j\in J_0}R_{e_j,e^*}(I_j)\leq \frac{n}{m}T_0\log(1/\delta)+\sum_{j\in J_0}\eps|I_j|\leq \frac{n}{m}T_0\log(1/\delta)+\eps T, 
\]
as desired.
\end{proof}
We are now ready to prove~\cref{thm:baseline-bound}.
\begin{proof}[Proof of~\cref{thm:baseline-bound}]
We first remark that the sampling, forwarding, and eviction of experts is completely independent of the random bits used to perform our various MWUs. Indeed, whether an expert is evicted or forwarded depends only on its loss compared to other experts since the time it was sampled. We may therefore \emph{fix} a sampling and then bound the regret solely over the randomness of the multiplicative weights updates. To be precise, we fix a sampling of experts such that $|\mathcal{S'}|=O(\frac{n}{m}\log T)$ which by~\cref{lemma:bound-on-yes} holds with high probability in $T$.

Let us briefly recall how the MWU of~\cref{alg:hierbaseline} works and introduce some notation for the proof. First, for each layer $i$, we run MWU on the pool of experts over each block of $T_i$ days. Let us call these algorithms $\mathcal{A}_0',\dots, \mathcal{A}_{k-1}'$ labelled according to level. We define $\mathcal{A}_0=\mathcal{A}_0'$ and recursively let $\mathcal{A}_i$ be the algorithm obtained by running MWU between $\mathcal{A}_{i}'$ and $\mathcal{A}_{i-1}$ restarting after every block of length $T_i$. Then~\cref{alg:hierbaseline} is simply the algorithm $\mathcal{A}_{k-1}$.
Let us define $\mathcal{S}_i=\{B\in \mathcal{S}\mid |B|=|T_i|\}$, namely the blocks of $\mathcal{S}$ at level $i$. Let us further define the subsets of time $J_0\subseteq J_1\subseteq\cdots \subseteq J_k=[T]$ as follows: $J_k=T$ and for $0\leq i\leq k-1$, we set $J_i=\bigcup_{B\in \mathcal{S}_i}B$.
We will prove the following claim by induction:
\begin{claim}\label{claim:regret-comp}
For any constant $\gamma>0$, there exists a constant $C>0$ such that with probability $1-O(T^{-\gamma})$ for each $0\leq i \leq k-1$, 
\begin{align}\label{eq:tech1}
R_{\mathcal{A}_i,\mathcal{A}^{synth}}(J_i)\leq C\sqrt{\log n T} \cdot \sum_{j=0}^{i} \frac{|\mathcal{S}_j|T_j}{\sqrt{T_{j-1}}}.
\end{align}
Here we have defined $T_{-1}=1$ for convenience. Moreover, 
\begin{align}\label{eq:tech2}
R_{\mathcal{A}_{k-1},\mathcal{A}^{synth}}(T)\leq C\sqrt{\log n T} \cdot \left(\frac{T}{\sqrt{T_{k-1}}}+\sum_{j=0}^{k-1} \frac{|\mathcal{S}_j|T_j}{\sqrt{T_{j-1}}}\right).
\end{align}
\end{claim}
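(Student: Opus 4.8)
The plan is to prove \eqref{eq:tech1} by induction on $i$ and then obtain \eqref{eq:tech2} from the case $i=k-1$ by separately accounting for the days outside $J_{k-1}$. Since the random sampling of experts has been fixed, the only remaining randomness is that of the multiplicative weights subroutines, so we will apply \cref{lem:mwu} once for each of the $\mathrm{poly}(T)$ many MWU instances with an inverse-polynomial failure probability, union bound, and combine with \cref{lemma:bound-on-yes} (to bound $|\mathcal{S}|$), so that everything below holds simultaneously with probability $1-O(T^{-\gamma})$. For the base case $i=0$ we have $\mathcal{A}_0=\mathcal{A}_0'$, which runs a fresh MWU over the level-$0$ pool on each level-$0$ block; on every $B\in\mathcal{S}_0$ the algorithm $\mathcal{A}^{synth}$ plays throughout $B$ some expert present in the level-$0$ pool during $B$ — an arbitrary pool expert if $B$ is a marked stay-block, and $e^*$ if $B$ is an actualized stay-block, in which case $e^*$ was sampled and never leaves. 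So \cref{lem:mwu} gives $R_{\mathcal{A}_0,\mathcal{A}^{synth}}(B)=O(\sqrt{T_0\log(nT)})$, and summing over $\mathcal{S}_0$ and using $\sqrt{T_0}\le T_0$ yields \eqref{eq:tech1} for $i=0$.

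For the inductive step, recall that on each level-$i$ block $\mathcal{A}_i$ runs a fresh two-arm MWU between $\mathcal{A}_i'$ and $\mathcal{A}_{i-1}$, so $R_{\mathcal{A}_i,\mathcal{A}_i'}(B),R_{\mathcal{A}_i,\mathcal{A}_{i-1}}(B)=O(\sqrt{T_i\log(nT)})$ for every level-$i$ block $B$. There is at most one actualized block overall; if it lies in $\mathcal{S}_i$, call it $B^*$: on $B^*$ the pool of $\mathcal{A}_i'$ contains $e^*$ and $\mathcal{A}^{synth}$ plays $e^*$, so $R_{\mathcal{A}_i,\mathcal{A}^{synth}}(B^*)\le R_{\mathcal{A}_i,\mathcal{A}_i'}(B^*)+R_{\mathcal{A}_i',e^*}(B^*)=O(\sqrt{T_i\log(nT)})$. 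For every other $B\in\mathcal{S}_i$ use $R_{\mathcal{A}_i,\mathcal{A}^{synth}}(B)\le R_{\mathcal{A}_i,\mathcal{A}_{i-1}}(B)+R_{\mathcal{A}_{i-1},\mathcal{A}^{synth}}(B)$ and split $B$ along the intervals the construction produces inside it: the intervals lying inside level-$(i-1)$ stay-blocks of $B$, together with the level-$0$ marked intervals, make up $B\cap J_{i-1}$ and are handled by the inductive hypothesis; the remaining intervals are \evict-intervals, and those of \evict-blocks at levels $<i-1$ again lie inside level-$(i-1)$ stay-blocks, so only the level-$(i-1)$ \evict-intervals contribute new regret. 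On such an interval $\mathcal{A}^{synth}$ plays an expert $e_j$ that stays in the level-$(i-1)$ pool for the whole interval, the interval being a union of whole level-$(i-1)$ blocks, so $\mathcal{A}_{i-1}'$ and hence $\mathcal{A}_{i-1}$ loses $O(\sqrt{T_{i-1}\log(nT)})$ to $e_j$ on each such block; since these \evict-intervals are disjoint and contained in $B$ they span at most $T_i/T_{i-1}$ level-$(i-1)$ blocks, contributing $O\big(\tfrac{T_i}{\sqrt{T_{i-1}}}\sqrt{\log(nT)}\big)$ per level-$i$ block. Summing over $\mathcal{S}_i$, using $\sqrt{T_i}\le T_i/\sqrt{T_{i-1}}$, and adding $R_{\mathcal{A}_{i-1},\mathcal{A}^{synth}}(J_{i-1})$ from the inductive hypothesis (legitimate since $J_{i-1}\subseteq J_i\setminus B^*$) establishes \eqref{eq:tech1} at level $i$ for a large enough absolute constant $C$.

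For \eqref{eq:tech2}, write $R_{\mathcal{A}_{k-1},\mathcal{A}^{synth}}(T)=R_{\mathcal{A}_{k-1},\mathcal{A}^{synth}}(J_{k-1})+R_{\mathcal{A}_{k-1},\mathcal{A}^{synth}}([T]\setminus J_{k-1})$, the first term being \eqref{eq:tech1} at $i=k-1$. The set $[T]\setminus J_{k-1}$ is covered by the level-$(k-1)$ \evict-intervals and the post-actualization tail (the days after the level-$(k-1)$ block containing the actualized block, if one occurred); on the former $\mathcal{A}^{synth}$ plays an expert that stays in the level-$(k-1)$ pool, and on the latter it plays $e^*$, which — staying forever in the pool of the level at which it was sampled — is re-forwarded into the level-$(k-1)$ pool at the start of every subsequent level-$(k-1)$ block and hence is present throughout the tail. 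Thus on each of the at most $T/T_{k-1}$ level-$(k-1)$ blocks involved, $\mathcal{A}_{k-1}'$ and hence $\mathcal{A}_{k-1}$ loses $O(\sqrt{T_{k-1}\log(nT)})$, totalling $O\big(\tfrac{T}{\sqrt{T_{k-1}}}\sqrt{\log(nT)}\big)$, which is the extra term in \eqref{eq:tech2}.

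The routine part of all this is the MWU bookkeeping; the real work is the structural input it rests on — that $J_0\subseteq\cdots\subseteq J_{k-1}$, that the \evict-interval of a block at level $\ell$ is contained in the level-$(\ell+1)$ block containing its start (so the \evict-intervals inside a level-$i$ block span at most $T_i$ days), and that such an interval can be charged to a competitive expert present in the level-$\ell$ pool throughout it. This is exactly where the rule that experts forwarded to a larger block cannot be evicted before that block ends is essential: if a hypothetical copy of $e^*$ sampled into a small pool were to survive past the end of the enclosing larger block, it would have been carried up into that larger pool and, having by then an older arrival time, become protected there — preventing the eviction and contradicting the block being an \evict-block. Turning this into a clean argument requires careful tracking of arrival times, the protection window $\tau$, and the order of eviction, forwarding and sampling in \cref{alg:buckets}; I expect this to be the main obstacle.
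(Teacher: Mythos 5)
Your proof is correct and follows essentially the same route as the paper's: induction on $i$, decomposing $J_i$ into $J_{i-1}$ (handled by the inductive hypothesis) and the level-$(i-1)$ blocks outside $J_{i-1}$ on which $\mathcal{A}^{synth}$'s expert sits in the level-$(i-1)$ pool for a whole block, plus one between-level MWU term of $O(\sqrt{T_i\log(nT)})$ per block of $\mathcal{S}_i$, and finally the $T/\sqrt{T_{k-1}}$ term for the days outside $J_{k-1}$. The only differences are cosmetic: you organize the accounting per block of $\mathcal{S}_i$ rather than over $J_i\setminus J_{i-1}$ globally, your base case invokes the MWU guarantee where the paper just uses the trivial bound $|J_0|$, and you treat the actualized block separately by comparing $\mathcal{A}_i$ to $\mathcal{A}_i'$ and then to $e^*$ at level $i$ — which is in fact slightly more careful than the paper's blanket assertion that the synthetic expert always lies in the level-$(i-1)$ pool.
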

\begin{proof}[Proof of Claim]
We first fix $C$ sufficiently large such that with probability $1-O(T^{-\gamma})$, all the different MWUs that we run over any amount of days $T'$ have regret at most $(C/2)\sqrt{\log (nT)}\sqrt{T'}$ compared with the best fixed option. This is possible by~\cref{lem:mwu}. 

We prove the claim by induction on $i$. For $i=0$ it holds trivially as \[R_{\mathcal{A}_0,\mathcal{A}^{synth}}(J_0)\leq |J_0|=\frac{|\mathcal{S}_0|T_0}{\sqrt{T_{-1}}}.\] Now let $i\in \{1,\dots, k-1\}$ be given and assume inductively that the bound holds for smaller values of $i$. We claim that due to the forwarding of experts between levels of the algorithm and eviction rule, for each block of $T_{i-1}$ days $B\subseteq J_i\setminus J_{i-1}$ at level $i-1$, the corresponding level $i-1$ pool contains the expert played by $\mathcal{A}^{synth}$ during the days of $B$. 
To see why this is the case, note that any interval $I_j$ which contains the leftmost day of $B$ will in fact contain all the days of $B$ due to our eviction rule, namely, if expert $e_j$ survives to the beginning of block $B$, it can be evicted by the end of block $B$ at the earliest. Moreover, due to the forwarding, $e_j$ will indeed be present in the pool at level $i-1$ during the days of block $B$. By the guarantees of MWU, it follows that 
\[
R_{\mathcal{A}_{i-1},\mathcal{A}^{synth}}(J_i\setminus J_{i-1})\leq (C/2)\sqrt{T_{i-1}\log (nT)}\cdot \frac{|J_i\setminus J_{i-1}|}{T_{i-1}}\leq (C/2)\sqrt{\log n T}\frac{|\mathcal{S}_i| T_i}{\sqrt{T_{i-1}}}.
\]
Combining this with the induction hypothesis, 
\[
R_{\mathcal{A}_{i-1},\mathcal{A}^{synth}}(J_i)\leq (C/2)\sqrt{\log n T}\frac{|\mathcal{S}_i| T_i}{\sqrt{T_{i-1}}}+ C\sqrt{\log n T} \cdot \sum_{j=0}^{i-1} \frac{|\mathcal{S}_j|T_j}{\sqrt{T_{j-1}}}.
\]
Finally, by the guarantee of MWU between $\mathcal{A}_i'$ and $\mathcal{A}_{i-1}$ (which gets restarted after every block of length $T_i$), 
\begin{align*}
R_{\mathcal{A}_{i},\mathcal{A}^{synth}}(J_i) \leq & R_{\mathcal{A}_{i-1},\mathcal{A}^{synth}}(J_i)+(C/2)\sqrt{\log n T}|\mathcal{S}_i|\sqrt{T_i} \\
\leq & R_{\mathcal{A}_{i-1},\mathcal{A}^{synth}}(J_i)+(C/2)\sqrt{\log n T}\frac{|\mathcal{S}_i| T_i}{\sqrt{T_{i-1}}} \\
\leq & C\sqrt{\log n T} \cdot \sum_{j=0}^{i} \frac{|\mathcal{S}_j|T_j}{\sqrt{T_{j-1}}}
\end{align*}
where the second to last inequality used that $|T_i|>|T_{i-1}|$. This is the desired bound. Finally,~\eqref{eq:tech2} essentially follows from the same argument. Due to the forwarding of experts, for each block of days $B\subseteq T\setminus J_{k-1}$ at level $k-1$, the pool at level $k-1$ contains the expert played by $\mathcal{A}^{synth}$ during the days of $B$. Thus, 
\[
R_{\mathcal{A}_{k-1},\mathcal{A}^{synth}}(T)\leq  C\sqrt{\log n T} \cdot \frac{T}{\sqrt{T_{k-1}}}+ R_{\mathcal{A}_{k-1},\mathcal{A}^{synth}}(J_{k-1}),
\]
where the first term comes from the restarting of MWU between $T_{k-1}$ blocks and the second term is the regret of $\mathcal{A}_{k-1}$ in $J_{k-1}$. Plugging in the bound of~\eqref{eq:tech1} with $i=k-1$ gives the desired result. 
\end{proof}
To finish the proof from the claim, we simply upper bound each $|\mathcal{S}_i|\leq |\mathcal{S'}|=O(\frac{n}{m}\log T)$. Thus, we get that 
\[
R_{\mathcal{A}_{k-1},\mathcal{A}^{synth}}(T)=O\left(\log (Tn)^2\cdot \left(\frac{T}{\sqrt{T_{k-1}}}+\frac{n}{m}\sum_{j=0}^{k-1}\frac{T_j}{\sqrt{T_{j-1}}}\right)\right).
\]
Combining this with~\cref{lemma:regret-of-A0} (with $1/\delta$ a sufficiently high degree polynomial in $T$), we obtain the desired regret bound.
\end{proof}

\subsubsection{Setting the parameters}
In this section, we set the parameters in~\cref{thm:baseline-bound}, to obtain the following theorem.
\begin{theorem}\label{thm:final-baseline-bound}
Assume $T\geq C\frac{n}{m}$ for a sufficiently large constant $C$. There exists a choice of parameters for~\cref{alg:hierbaseline} such that with high probability in $T$, its regret is
\begin{align}
\tilde O \left(\frac{T}{m}+\sqrt{\frac{Tn}{m}}\right).
\end{align}
\end{theorem}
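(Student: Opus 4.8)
The plan is to plug a specific geometric hierarchy of block sizes into \cref{thm:baseline-bound}. Set $V:=\sqrt{Tm/n}$; the hypothesis $T\geq Cn/m$ guarantees $V\geq\sqrt C$, which we take to be a large constant, and since $m\leq n$ we also have $V\leq\sqrt T$. Let $k:=\lceil\log_2\log_2 T\rceil=\Theta(\log\log T)$, and choose $T_0<T_1<\cdots<T_{k-1}$ so that $T_i/\sqrt{T_{i-1}}=V$ for every $0\leq i\leq k-1$ (with the convention $T_{-1}=1$); solving this recursion with $T_0=V$ gives $T_i=V^{\,2-2^{-i}}$, since the exponents $a_i$ obey $a_0=1$ and $a_i=1+a_{i-1}/2$, hence $a_i=2-2^{-i}$. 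Finally set $T_k=T$. Then $T_{k-1}=V^{\,2-2^{-(k-1)}}<V^2=Tm/n\leq T$, so the sequence is strictly increasing, as required.

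With these choices I would verify the three terms of \cref{thm:baseline-bound} one at a time. The first term $T_k/m=T/m$ is already one of the two target terms. For the summation term, each of the $k$ summands equals $V$, so $\frac nm\sum_{i=0}^{k-1}\frac{T_i}{\sqrt{T_{i-1}}}=k\cdot\frac nm\,V=k\sqrt{Tn/m}$ (using $\frac nm V=\sqrt{Tn/m}$), which is $\tilde O(\sqrt{Tn/m})$ since $k=O(\log\log T)$ is absorbed by $\tilde O$. For the remaining term, $\sqrt{T_{k-1}}=V^{\,1-2^{-k}}$, so
\[
\frac{T}{\sqrt{T_{k-1}}}=\frac{T}{V}\cdot V^{2^{-k}}=\sqrt{Tn/m}\cdot V^{2^{-k}},
\]
using $T/V=\sqrt{Tn/m}$. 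Now I would invoke the choice of $k$: since $1\leq V\leq T$ we have $2^{-k}\leq 1/\log_2 T$, hence $V^{2^{-k}}\leq T^{1/\log_2 T}=2$, so this term is also $O(\sqrt{Tn/m})$. Combining the three contributions via \cref{thm:baseline-bound} yields regret $\tilde O(T/m+\sqrt{Tn/m})$.

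The conceptual heart of the argument — and the reason the algorithm needs $\Theta(\log\log T)$ levels rather than a constant number — is the telescoping identity $\sum_{i=0}^{k-1}2^{-i}=2-2^{-k}$: a constant number of levels would leave a polynomial deficit $V^{2^{-k}}=T^{\Omega(1)}$ in the last term, and only once $2^{-k}$ drops to $O(1/\log T)$ does this deficit become a constant. I expect the only genuine work to be the routine integrality bookkeeping: the divisibility requirement $T_i\equiv0\pmod{T_{i-1}}$ together with $T_k=T$, since the ideal values $V^{\,2-2^{-i}}$ need not be integers nor nest. I would handle this by replacing each target ratio $V^{2^{-i}}$ by $\rho_i:=\lceil V^{2^{-i}}\rceil\in[V^{2^{-i}},\,2V^{2^{-i}}]$ and setting $T_i:=\prod_{j\leq i}\rho_j$; each $\rho_i\geq2$ so the sequence strictly increases, and each $T_i$ changes from its target by at most a factor $2^{k+1}=\mathrm{polylog}(T)$, absorbed by $\tilde O$ after one checks the three terms again (each $T_i/\sqrt{T_{i-1}}$ stays $O(V\cdot\mathrm{polylog}(T))$ and $T_{k-1}\geq V^{\,2-2^{-(k-1)}}$ still holds). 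The residual requirement $T_{k-1}\mid T$ is arranged by taking the $\rho_i$ (hence the $T_i$) to be powers of two and assuming, via the standard doubling argument, that $T$ is a power of two; and in the degenerate regime where $T$ is within a constant factor of $n/m$ (so $V=O(1)$) or $m=\Omega(n)$, one instead uses $O(1)$ levels — or plain MWU — which already achieves the bound since then $\sqrt{Tn/m}=\Theta(\sqrt T)$.
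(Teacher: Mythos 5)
Your proposal is correct and follows essentially the same route as the paper: both plug the hierarchy $T_i\approx (Tm/n)^{(2-2^{-i})/2}$ (equivalently $T_i/\sqrt{T_{i-1}}\approx\sqrt{Tm/n}$ for all $i$) into \cref{thm:baseline-bound} and take $k=\Theta(\log\log T)$ levels so that each of the $k$ summands contributes $\sqrt{Tn/m}$ and the residual factor $V^{2^{-k}}$ in the $T/\sqrt{T_{k-1}}$ term collapses to a constant. The only difference is cosmetic normalization (the paper inflates $T_0$ slightly to make all the balance equations exact equalities, while you accept a constant slack), plus your explicit handling of integrality and divisibility, which the paper elides.
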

\begin{proof}
We start by fixing the values of the $T_i$'s. Let's $T_{-1}=1$ and $T_0=(T\frac{m}{n})^{\frac{2^k}{2^{k+1}-1}}$. Note that $T_0\geq \sqrt{C}> T_{-1}$ by the assumption on $T$. Now inductively, for $1\leq i \leq k-1$, we set $T_i=\frac{T_{i-1}^{3/2}}{\sqrt{T_{i-2}}}$. We remark that this choice of parameters solves the set of equations
\[
\frac{T}{\sqrt{T_{k-1}}}=\frac{n}{m}\frac{T_{k-1}}{\sqrt{T_{k-2}}}=\frac{n}{m}\frac{T_{k-2}}{\sqrt{T_{k-3}}}=\cdots =\frac{n}{m}\frac{T_{1}}{\sqrt{T_{0}}}=\frac{n}{m}T_0.
\]
It follows by a simple inductive argument that for each $i$, $T_i>T_{i-1}$. In fact, another straightforward inductive argument shows that for $0\leq i \leq k-1$, we have the equation $T_i=T_0^{\frac{2^{i+1}-1}{2^i}}$. Note that $T_{k-1}=\left(\frac{Tm}{n}\right)^{\frac{2^{k+1}-2}{2^{k+1}-1}} < T$, so this is a legal setting of parameters.  Now it is clear that with this choice of parameters, $\frac{T_i}{\sqrt{T_{i-1}}}=\frac{T_{i-1}}{\sqrt{T_{i-2}}}$ for $i=1,\dots,k-1$. Moreover, it is easy to check that $\frac{T}{\sqrt{T_{k-1}}}=\frac{n}{m}\frac{T_{k-1}}{\sqrt{T_{k-2}}}$. Thus, it follows from~\cref{thm:baseline-bound} that the regret is bounded by 
\[
\tilde O\left(\frac{T}{m}+\frac{kn}{m}T_0 \right).
\]
Note that for each $1\leq i \leq k-1$,
\[
\frac{T_i}{T_{i-1}}=T_0^{1/2^i}\geq T_0^{1/2^{k}}\geq \left(T\frac{m}{n}\right)^{\frac{1}{2^{k+1}}},
\]
Thus, choosing $k=\lg\lg (Tm/n)-O(1)$, where the $O(1)$ hides a sufficiently large constant, the gaps between the block sizes can be made larger than any constant. With this choice of $k$, 
\[
T_0=\left(\frac{Tm}{n}\right)^{1/2+\frac{1}{2^{k+1}-1}}=O\left(\sqrt{\frac{Tm}{n}}\right),
\]
so, we obtain the final regret bound of 
\[
\tilde O\left(\frac{T}{m}+k\sqrt{\frac{Tn}{m}}\right)=\tilde O\left( \frac{T}{m}+\sqrt{\frac{Tn}{m}} \right),
\]
as desired.
\end{proof}

\section{Bootstrapping}
\label{sec:boostrap}
In this section we describe the bootstrapping technique of the previous
work. For convenience, we restate their result using the following
lemma.

\bootstrap*
\begin{proof}
Let $j^{*}$ be the index of the best expert. The algorithm operates
over $T_{1}$ episodes of $T_{2}$ days each. Suppose we run a copy
of ALG2 that resets for every episode. For each episode $i$ and original
expert $e_{j}$, let synthetic expert $s_{j}$ be the answer of the
MWU with two choices, the original expert $e_{j}$ and the copy of
ALG2. By the guarantee of ALG2, the copy of ALG2 has regret at most
$R_{2}$ during episode $i$. By the union bound, with probability
$1-\delta(nT_{1}+1)$, we assume that all MWU applications and the
algorithms succeed. By the guarantee of MWU, $e_{i,j}$ has regret
at most $R_{3}:=O\left(\sqrt{T_{2}\ln\left(n/\delta\right)}\right)$
higher than ALG2. Next, consider a new expert problem defined over
the episodes as follows. Let $loss_{i}\left(e\right)$ be the loss
of expert $e$ for episode $i$ and the truncated loss 

\[
tloss_{i}\left(e\right):=\max\left(loss_{i}\left(e\right)-loss_{i}\left(ALG2\right),-R_{2}\right)
\]

We note two important properties of the truncated loss. First, notice
that $\left|tloss_{i}\left(s_{j}\right)\right|\le\max\left(R_{2},R_{3}\right)\ \forall j$.
Second, for the best expert $e_{j^{*}}$, we have 
\[
tloss_{i}\left(s_{j^{*}}\right)+loss_{i}(ALG2)=\max\left(loss_{i}\left(s_{j^{*}}\right),loss_{i}\left(ALG2\right)-R_{2}\right)\le loss_{i}\left(e_{j^{*}}\right)+R_{3}.
\]
We run ALG1 over the episodes for the $n$ synthetic experts using
the truncated loss. By the guarantee of ALG1, the regret of the algorithm
compared with the best synthetic expert is at most $R_{1}\cdot\max\left(R_{2},R_{3}\right)$.
We have

\begin{align*}
\sum_{i}loss_{i}\left(ALG1\right) & \le\sum_{i}tloss_{i}\left(ALG1\right)+loss_{i}\left(ALG2\right)\\
 & \le R_{1}\cdot\max\left(R_{2},R_{3}\right)+\sum_{i}tloss_{i}\left(s_{j^{*}}\right)+loss_{i}\left(ALG2\right)\\
 & \le R_{1}\cdot\max\left(R_{2},R_{3}\right)+\sum_{i}\left(loss_{i}\left(e_{j^{*}}\right)+R_{3}\right).
\end{align*}
Thus, the total regret is at most $O\left(T_{1}R_{3}\right)+R_{1}R_{2}$.

Note that the space complexity also increases by a constant multiplicative factor. This is because we only have to create synthetic experts for the experts actually used by ALG1. Since they both have space complexity at most $m$, ALG1 will only query at most $m$ experts at any day (see our space model in Section \ref{sec:prelim}). Thus we can charge the extra parameters we need to keep track of in the MWU in the creation of the synthetic experts to the experts tracked by ALG1. Likewise, any copy of ALG2 also requires $O(m)$ additional space, but we only have one copy of ALG2 at any day.
\end{proof}
\begin{corollary}\label{cor:bootstrap}
Suppose there is an algorithm which over $T$ days achieves regret $R$ with probability at least $1-\delta$ and uses space $m$. Then for $i\geq 1$ there exists an algorithm which over $T^i$ days achieves regret
\[O\left(\left(R^{i}+\sum_{j=0}^{i-2}R^{j}T^{(i+1-j)/2}\right)\sqrt{\ln\left(n/\delta\right)}\right),\]
with probability at least $1-\delta(nT+1)^{i-1}$.
Additionally, there exists a constant $C$ such that the space used by this is at most $C^i m$.
\end{corollary}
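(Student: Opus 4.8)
The plan is to prove the corollary by induction on $i$, where each inductive step is a single invocation of Lemma~\ref{lem:bootstrap}. Let $\mathcal{A}_1$ be the given algorithm over $T$ days (regret $R$ with probability $1-\delta$, space $m$); this is the base case, since at $i=1$ the claimed regret bound is $O(R\sqrt{\ln(n/\delta)}) \ge R$, the failure bound is $\delta(nT+1)^0 = \delta$, and the space bound is $Cm \ge m$. For $i \ge 2$, I would take $\mathcal{A}_i$ to be the output of Lemma~\ref{lem:bootstrap} applied with $\mathrm{ALG1} = \mathcal{A}_1$ (so $T_1 = T$, $R_1 = R$) and $\mathrm{ALG2} = \mathcal{A}_{i-1}$ (so $T_2 = T^{i-1}$ and $R_2$ the regret of $\mathcal{A}_{i-1}$); the resulting horizon is $T_1T_2 = T^i$. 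The one genuinely non-mechanical choice is to place the \emph{short} algorithm $\mathcal{A}_1$ in the $\mathrm{ALG1}$ slot: the additive term of Lemma~\ref{lem:bootstrap} is $O(T_1\sqrt{T_2\ln(n/\delta)})$, so this choice gives $O(T\sqrt{T^{i-1}\ln(n/\delta)}) = O(T^{(i+1)/2}\sqrt{\ln(n/\delta)})$, whereas reversing the roles would give the strictly larger $O(T^{(2i-1)/2}\sqrt{\ln(n/\delta)})$ and would not match the stated formula.

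The regret then obeys the recursion $R_i = R\cdot R_{i-1} + O(T^{(i+1)/2}\sqrt{\ln(n/\delta)})$ with $R_1 = R$, and unrolling it is a routine geometric-type computation giving
\[
R_i = R^i + O\left(\sqrt{\ln(n/\delta)}\;\sum_{j=0}^{i-2} R^j\, T^{(i+1-j)/2}\right),
\]
which is the claimed bound (using $\ln(n/\delta)\ge 1$ to absorb the lone $R^i$ term, harmless in the intended regime of polynomially small $\delta$). For the space, Lemma~\ref{lem:bootstrap} inflates the space only by a constant factor --- synthetic experts need be created only for the (at most $m$) experts actually queried by $\mathrm{ALG1}$, and only one running copy of $\mathrm{ALG2}$ is kept at a time --- so $\mathrm{space}(\mathcal{A}_i) \le c_0\big(\mathrm{space}(\mathcal{A}_1) + \mathrm{space}(\mathcal{A}_{i-1})\big) \le c_0(m + C^{i-1}m)$ for an absolute constant $c_0$, and taking $C \ge 2c_0$ yields $\mathrm{space}(\mathcal{A}_i) \le C^i m$.

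For the success probability I would track it through the induction: running the internal meta-MWUs (there are $O(nT_1)=O(nT)$ of them, together with the $T_1=T$ resets of $\mathrm{ALG2}$ and the single copy of $\mathrm{ALG1}$) at confidence $\delta$, Lemma~\ref{lem:bootstrap} turns a failure probability $p$ for $\mathrm{ALG2}=\mathcal{A}_{i-1}$ into one of order $O(\delta(nT+1)) + T\cdot p$ for $\mathcal{A}_i$; starting from $p_1=\delta$ and iterating $i-1$ times, a union bound collapses to $O(\delta(nT+1)^{i-1})$, which is the stated bound after absorbing constants into the polynomial. I expect this last step to be the only place requiring care: making sure the nested union bounds --- $i-1$ applications of Lemma~\ref{lem:bootstrap}, each of which itself unions over $\Theta(nT)$ events --- telescope to $(nT+1)^{i-1}$ rather than something larger. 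Everything else (the horizon $T^i$, the constant-factor space blow-up per level, and the regret recursion) follows mechanically from invoking Lemma~\ref{lem:bootstrap} exactly $i-1$ times with $\mathcal{A}_1$ always in the $\mathrm{ALG1}$ slot.
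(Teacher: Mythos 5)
Your proposal is correct and matches the paper's proof: the paper also proceeds by induction on $i$, applying Lemma~\ref{lem:bootstrap} with $T_1=T$, $R_1=R$ (the base algorithm in the ALG1 slot) and $T_2=T^{i-1}$, $R_2$ given by the inductive hypothesis, so that the additive term is $O(T^{(i+1)/2}\sqrt{\ln(n/\delta)})$ and the recursion unrolls to the stated sum. Your treatment of the failure probability and the space recursion is, if anything, more careful than the paper's one-line argument.
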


\begin{proof}
The corollary immediately follows by induction. The case $i=1$ follows by assumption. For $i>1$, we set $T_1 = T, R_1 = R$ and $T_2 = T^{i-1}$ and 
\[
    R_2 = \left(R^{i-1} + \sum_{j = 0}^{i-3}R^j T^{(i-j)/2}\right) \sqrt{\ln(n/\delta})
\]
and plug into the lemma.
\end{proof}

\section{Putting Everything Together}\label{sec:main_proof}
We are now ready to put the results of the previous section together to prove the main theorem of the paper.

\mainthm*

\begin{proof}
Assume first that $T$ is of the form $(nm)^i$ for some integer $i$. Note that $i=O(1)$. By~\cref{thm:final-baseline-bound}, there exists an algorithm which over a sequence of $T_0=nm$ days achieves regret $\tilde O(n)$ with high probability in $T$. Plugging this algorithm into~\cref{cor:bootstrap} as the base algorithm, there exists an algorithm which over $T$ days achieve regret
\[
R=\tilde O\left(n^i+\sum_{j=0}^{i-2}n^j(nm)^{\frac{i+1-j}{2}}\right)=\tilde O(n^{i+1})
\]
also with high probability in $T$. (Note that the polylogarithmic factors also blow up by a constant factor of $i$, which is hidden in the $\tilde{O}$ notation.) Here we simply bounded $m\leq n$ for all occurrences of $m$. Using that $T=(nm)^i$, and putting $m=n^\delta$, we obtain that 
\[
R=\tilde O \left(nT^{\frac{1}{1+\delta}
}\right),
\]
as desired.

If $T$ is not of the form $(nm)^i$, we pick $i\in \N$ minimal such that $T'=(nm)^i\geq T$ and artificially extend the sequence of loss vectors by giving all experts the same loss for the remaining $T'-T$ days. We then run the algorithm with parameters corresponding to $T'$ but terminate after $T$ days. The regret of the algorithm on the extended sequence is the same as that of the original sequence and is bounded by
\[
\tilde O \left(nT'^{\frac{1}{1+\delta}}\right)=\tilde O \left(n(nmT)^{\frac{1}{1+\delta}} \right)=\tilde O \left(n^2 T^{\frac{1}{1+\delta}} \right),
\]
as desired. The space used by the algorithm can be bounded  using~\cref{lem:baseline-space}. Note that as $i$ is a constant, the bootstrapping procedure increases the space by only a constant factor by \cref{cor:bootstrap}. Recall that the baseline algorithm used $k=O(\log \log n)$ recursive layers, and thus, by the lemma, the space usage can be bounded by $O(k^3m)=\tilde O(n^\delta)$.
\end{proof}
\begin{remark}\label{rem:doubling}
As in prior works \cite{peng2023onlinesublinear}, our results can be easily made to handle the case where $T$ is unknown: we simply guess the value of $T$ as done in the proof of Theorem \ref{thm:main}. In more detail, we run multiple copies of our algorithm with geometrically increasing guesses of $T$ and run MWU over all the different algorithm copies. This only introduces an additional $O(\log n)$ factor blow up in the space and regret bounds.
\end{remark}

\bibliographystyle{alpha}
\bibliography{bib}

\end{document}